\documentclass[journal]{IEEEtran}

\usepackage[noadjust]{cite}
\usepackage{graphicx}
\usepackage{amsmath,amsfonts,amssymb,amsthm}
\usepackage{mathtools}
\usepackage{url}
\usepackage[cal=boondoxo]{mathalfa}
\DeclareMathAlphabet{\pazocal}{OMS}{zplm}{m}{n}
\usepackage{color}
\usepackage{enumerate}
\usepackage{enumitem}
\usepackage{subcaption}
\usepackage[format=hang]{caption}
\usepackage[hidelinks]{hyperref}
\usepackage{textcomp}

\usepackage[ruled,linesnumbered]{algorithm2e}

\makeatletter
\newcommand{\removelatexerror}{\let\@latex@error\@gobble}
\makeatother

\allowdisplaybreaks 
\newcounter{storealgline}
\IncMargin{.5em}
\newenvironment{alglist}
{\setcounter{storealgline}{\value{AlgoLine}}%
	\setcounter{AlgoLine}{0}%
	\Indp}
{\setcounter{AlgoLine}{\value{storealgline}}}
\DontPrintSemicolon
\makeatletter
\renewcommand{\Indentp}[1]{%
	\advance\leftskip by #1
	\advance\skiptext by -#1
	\advance\skiprule by #1}%
\renewcommand{\Indp}{\algocf@adjustskipindent\Indentp{\algoskipindent}}
\renewcommand{\Indm}{\algocf@adjustskipindent\Indentp{-\algoskipindent}}
\makeatother

\makeatletter
\renewcommand{\@IEEEsectpunct}{\\ \,}
\makeatother

\theoremstyle{theorem}
\newtheorem{thm}{Theorem}

\newtheorem{cor}[thm]{Corollary}
\newtheorem{lem}[thm]{Lemma}

\newtheorem{asmp}{Assumption}

\newtheorem{defn}{Definition}
\theoremstyle{remark}

\newtheorem{rem}{Remark}

\DeclareMathOperator{\col}{col}
\DeclareMathOperator{\cs}{colspan}
\DeclareMathOperator{\rank}{rank}
\DeclareMathOperator{\diag}{diag}
\DeclareMathOperator{\tr}{tr}

\newcommand{\E}{\mathcal{E}}
\newcommand{\Pk}{\mathcal{P}}

\newcommand{\R}{\mathbb{R}}     
\newcommand{\Z}{\mathbb{Z}}

\newcommand{\EV}{\mathbb{E}}
\newcommand{\Pb}{\mathbb{P}}
\newcommand{\Lt}{\mathcal{L}_2}

\newcommand{\Sym}[1]{\mathbb{S}^{#1}}
\newcommand{\I}{\mathcal{I}}

\newcommand{\n}[1]{\mathtt{n}\left(#1\right)}    
\newcommand{\lag}[1]{\mathtt{L}\left(#1\right)}

\newcommand{\B}{\mathcal{B}}
\newcommand{\Hk}{\mathcal{H}}
\newcommand{\F}{\mathcal{F}}
\newcommand{\Var}{\mathcal{S}}

\newcommand{\bint}[2]{{|[#1,#2]}}
\newcommand{\ra}[1]{#1}

\usepackage{xifthen}
\newcommand{\sR}[2][]{\ifthenelse{\isempty{#1}}{\mathbb{R}^{#2}}{\mathbb{R}^{#1\times #2}}}
\makeatletter
\providecommand{\bigsqcap}{%
	\mathop{%
		\mathpalette\@updown\bigsqcup
	}%
}
\newcommand*{\@updown}[2]{%
	\rotatebox[origin=c]{180}{$\m@th#1#2$}%
}
\makeatother
\begin{document}

\title{Data-Driven Probabilistic Finite $\mathcal{L}_2$-Gain Stabilization of Stochastic Linear Systems}

\author{Yitao Yan, Shuangyu Han, Jie Bao and Biao Huang
\thanks{Y. Yan, S. Han and J. Bao are with the School of Chemical Engineering, UNSW Sydney, NSW 2052, Australia. (e-mail:  y.yan@unsw.edu.au; shuangyu.han@student.unsw.edu.au; j.bao@unsw.edu.au).}
\thanks{B. Huang is with the Department of Chemical and Materials Engineering, University of Alberta, 116 St. and 85 Ave., Edmonton, AB, Canada T6G 2R3, and also with Xi’an  Jiaotong–Liverpool  University,  Suzhou  215123,  China. (e-mail: biao.huang@ualberta.ca)}
}
\maketitle

\begin{abstract}
    In process operations, it is desirable to manage the sensitivity of the system output against external disturbance in the form of finite $\Lt$-gain stabilization. This matter is, however, nonsensical for stochastic systems because the stochastic uncertainties in the control input almost always lead to an unbounded $\Lt$ gain from the disturbance to the output. To address this issue, this article develops a novel concept that characterizes the $\Lt$ gain of stochastic systems in a probabilistic way. Combined with a large data set, we formulate a data-driven probabilistic finite $\mathcal{L}_2$-gain stabilization design using noisy trajectory measurements and the disturbance forecast that does not necessarily agree with the actual future disturbance. The design approach consists of a data-driven trajectory estimation algorithm, whose resulting estimation error covariance is nicely integrated into the feasibility conditions for controller synthesis, leading to a convex offline design in the form of linear matrix inequalities. The effectiveness of the proposed design, along with the additional insights provided by the approach, is illustrated via a numerical example.
\end{abstract}

\begin{IEEEkeywords}
data-driven control, probabilistic finite $\mathcal{L}_2$-gain stabilization, behavioral systems theory, stochastic systems.
\end{IEEEkeywords}

\IEEEpeerreviewmaketitle

\allowdisplaybreaks

\section{Introduction}
\IEEEPARstart{D}{ata-driven} analysis and control of dynamical systems has received wide attention in the past two decades. Most notably, the introduction of the fundamental lemma in the seminal work by Willems et. al. \cite{Willems:2005} provided a parameterization of the system behavior directly based on one of its persistently exciting trajectories. This lemma has been studied extensively in many aspects, including its equivalent representations \cite{Markovsky:2021,Markovsky:2023}, its extension to the use of multiple trajectories \cite{vanWaarde:2020}, the relaxation of persistent excitation through the informativity framework \cite{vanWaarde:2020a}, the handling of noise in measurements \cite{Alpago:2020,Yan:2024a}, and the representation for stochastic systems using polynomial chaos expansion \cite{Pan:2022,Faulwasser:2023}. Based on the fundamental lemma, various data-driven control approaches have been developed, including quadratic tracking and regulation \cite{Markovsky:2008,daSilva:2018,Yan:2024b,liu2024learning}, Lyapunov and dissipativity design \cite{dePersis:2019,Yan:2025}, control algorithms based on the data-enabled predictive control (DeePC) structure \cite{Coulson:2019,Berberich:2020a}, informativity-based control design approaches \cite{vanWaarde:2025}, distributed control of large-scale systems \cite{Allibhoy:2021,Yan:2024}, control of certain classes of nonlinear systems \cite{Berberich:2022,Alsalti:2023}, and variance control of stochastic systems based on the stochastic fundamental lemma \cite{Pan:2025,Pan:2025a}.

In the control design of dynamical systems, an important aspect of consideration is the regulation of the sensitivity of the output of the controlled system to exogenous disturbances, which is often realized by bounding the $\Lt$ gain of the controlled system. This problem has been studied extensively for deterministic systems, both in the model-based (e.g., \cite{Boyd:1994} and references therein) and in the data-driven context (e.g., \cite{Seuret:2023,Nguyen:2024}). However, characterizing such a bound in practice is difficult because of various sources of inherent stochastic uncertainties. On the one hand, predictive control algorithms often incorporate forecasts in the possible future disturbance to make informed decisions (e.g., the power generated by solar panels based on the weather forecast). While these predictions can be based on historical data, providing good indications of the disturbance trend (as reflected by a mean prediction), the actual future disturbance is most likely a random value around the prediction. On the other hand, when implementing a control algorithm, the actual control input to the system can be uncertain, e.g., the actual flowrate of the fine powder fed through a small orifice may follow a certain probabilistic distribution due to dispersion. In this case, the actual manipulated variable is the expectation of the flowrate rather than the flowrate itself. These two types of uncertainties affect the system dynamics in different ways. Meanwhile, trajectories are typically measured with noise, which is external to the system and masks the inherent uncertainties in the manipulated variable and exogenous disturbance. A potential strategy to manage the impact of disturbances involves developing a certainty-equivalence controller that ignores uncertainties in predictions while incorporating regularization within the algorithm to enhance robustness \cite{Dorfler:2023}. This approach yields a simple control algorithm with good nominal performance, but the certainty-equivalence controller alone is sensitive to prediction errors and measurement noise, and robustness primarily relies on the tuning of the regularization parameters, leading to an empirical description of robust performance. A more well-studied line of research focuses on the worst-case scenario either among all possible admissible disturbance trajectories (e.g., \cite{Yan:2024,Yan:2025}) or within an assumed bounded region (e.g., \cite{Bisoffi:2022}). However, due to the uncertainty in the control input, this rationale is nonsensical in the stochastic setting in general because $\Lt$ gain, as an indicator of the worst-case gain among all admissible trajectories of the system, is almost always unbounded and cannot be regulated.

The above discussions suggest the need for a control approach that is able to systematically characterize the $\Lt$-like performance of stochastic systems in a meaningful way. In this paper, we introduce the novel concept of \emph{probabilistic $\Lt$ gain} and develop a data-driven probabilistic finite $\Lt$-gain stabilization control algorithm to regulate the sensitivity of the system output against disturbance in the presence of control input uncertainty, disturbance prediction error and measurement noise. The controller has access to the noisy measurements of the past trajectory and the disturbance forecast, but the actual disturbance has a random deviation from the predicted value, and the actual control action has a random uncertainty. After implementing the control action, all variables are measurable but with noise. The proposed design contains a filtering algorithm, and the filtering error covariance is an integral part of the control design. These two integrated components give a convex design approach that leads to a characterization of the probability of success for any given performance level (i.e., any prescribed $\Lt$ gain bound). The design reveals a trade-off between the performance of the mean trajectory and the robustness against uncertainties. This trade-off can be formulated purposefully, maximizing the probability of success given any performance requirement in the form of a desired $\Lt$ gain bound.

The remainder of this paper is organized as follows. Section \ref{sec:preliminaries} introduces relevant background information and formulates the problem by defining finite $\Lt$-gain stability in a probabilistic sense. Section \ref{sec:L2Stabilization} presents the main results of this paper, including an estimation algorithm, the solution to the finite $\Lt$-gain stabilization problem, and special treatment when the disturbance has constant mean. A numerical example is presented in Section \ref{sec:example} to demonstrate the effectiveness of the proposed approach. Section \ref{sec:conclusion} concludes this paper and states possible future directions.

\textbf{Notation.} We use the conventional notations $\R$, $\R^\mathsf{n}$, $\R^{\mathsf{m}\times \mathsf{n}}$, $\Z$, $\Z_{\geq0}$, etc. A set with generic variable $w$ is denoted as $\mathbb{W}$, whose dimension is denoted by $\mathsf{w}$. The set of all $\mathsf{n}\times\mathsf{n}$ symmetric matrices is denoted by $\mathbb{S}^\mathsf{n}$. An $\mathsf{n}\times \mathsf{n}$ identity matrix and an $\mathsf{m}\times \mathsf{n}$ zero matrix are denoted by $I_\mathsf{n}$ and $0_{\mathsf{m}\times \mathsf{n}}$, respectively, and the subscripts are dropped when they are clear from context. The Moore-Penrose inverse of a matrix $A$ is denoted by $A^\dagger$ and we define $A_\perp\coloneqq I-AA^\dagger$, $A^\perp\coloneqq I-A^\dagger A$. For matrices $A_1, A_2,\ldots, A_N$ of compatible dimensions, $\col(A_1, A_2,\ldots, A_N)$ represents vertical stacking of them and $\diag(A_1, A_2,\ldots, A_N)$ represents a block diagonal matrix with these matrices on the main diagonal. For a vector $w\in \sR{\mathsf{w}}$, and a symmetric matrix $M\in\Sym{\mathsf{w}}$, denote $\|w\|_M^2\coloneqq w^\top Mw$. The subscript is dropped if $M=I_\mathsf{w}$.

\section{Preliminaries and Problem Formulation}\label{sec:preliminaries}
\subsection{Background on Behavioral Systems Theory}\label{subsec:background}
We begin by introducing concepts in the behavioral framework relevant to this paper for deterministic dynamical systems, followed by their extensions to stochastic systems. The readers are referred to \cite{Willems:1991,Polderman:1998,Willems:2005} for detailed theory on deterministic systems and \cite{Willems:2013,Baggio:2017} on stochastic systems. 

This paper primarily focuses on discrete-time linear time-invariant (LTI) dynamical systems, which can be defined as a triple $\Sigma=(\mathbb{T},\mathbb{W},\B)$, where $\mathbb{T}\subset\mathbb{Z}_{\geq 0}$ is the time axis, $\mathbb{W}$ is a vector space called the signal space, and $\B$, the system behavior, is a complete linear subspace of $\mathbb{W}^\mathbb{T}$ that satisfies $\sigma\B\subset\B$, where $\sigma$ is a shifting operator such that $\sigma w_k=w_{k+1}$ \cite{Willems:1991}. The generic variable of the behavior, $w$, is called the manifest variable of the system. Every LTI system admits a kernel representation of the form
\begin{equation}\label{eq:deterministicKernel}
    R(\sigma,\sigma^{-1})w_k=0,
\end{equation}
where $R(\sigma,\sigma^{-1})$ is a polynomial matrix in the variables $\sigma$ and $\sigma^{-1}$. Suppose that the highest orders for $\sigma$ and $\sigma^{-1}$ in such a representation are $L^+$ and $L^-$, then the smallest possible value of $L\coloneqq L^++L^-$ is an integer invariant of $\B$ represented by \eqref{eq:deterministicKernel} called the lag, which is denoted by $\lag{\B}$. This finite lag allows a trajectory $\tilde{w}\in\B$ to be obtained via repeated weaving of segments from the $(L+1)$-step behavior  \cite{Markovsky:2005}, denoted as
\begin{equation}
    \B_{L+1}=\left\{w\mid\exists w'\in\B, k\in\mathbb{T} \ \text{s.t. } w=w'_\bint{0}{L^++L^-}\right\},
\end{equation}
where $\tilde{w}_\bint{0}{L^++L^-}$ denotes the trajectory segment of $\tilde{w}$ in the interval $[0,L^++L^-]$. In particular, for any trajectories $\tilde{w}^1,\tilde{w}^2\in\B_{L+1}$ that satisfy $L^-\geq\lag{\B}$, if $\tilde{w}_{\bint{L^++1}{L^++L^-}}^1=\tilde{w}_{\bint{0}{L^--1}}^2$, then
\begin{equation}
    \tilde{w}_{\bint{0}{L^+}}^1\wedge\tilde{w}^{2}\in\B_{L+L^++2},
\end{equation}
where $\wedge$ denotes the concatenation of trajectories.The ability to weave segments reduces the analysis of the system behavior $\B$ (a space of functions) to that of $\B_{L+1}$ (a linear subspace of $\sR{(L+1)\mathsf{w}}$). It also makes receding horizon analysis and control design possible.

Let $w=(w_1,w_2)$ be a partition of $w$. If all elements in $w_2$ are free (i.e., for any $w_2$, there exists $w_1$ such that $(w_1,w_2)\in\B$) while none of the elements in $w_1$ are free, then $(w_1,w_2)$ is an input/output partition of $w$, with $w_2$ being the input. The dimension of $w_2$ is the input cardinality of $\B$. A system behavior can also be described with the aid of latent variable $\ell$, leading to the latent variable dynamical system $\Sigma^{full}=(\mathbb{T},\mathbb{W},\mathbb{L},\B^{full})$, where $\B^{full}$ is the full behavior. The system manifest behavior $\B$ can be obtained from the full behavior as $\B=\{w\mid\exists \ell, (w,\ell)\in\B^{full}\}$. The latent variable $\ell$ is said to have the property of state if, for two trajectories $(w^1,\ell^1),(w^2,\ell^2)\in\B^{full}$, that $\ell_{k}^1=\ell_{k}^2$ implies $(\tilde{w}^1_\bint{0}{k-1}\wedge\tilde{w}^2_\bint{k}{\infty},\tilde{\ell}^1_\bint{0}{k-1}\wedge\tilde{\ell}^2_\bint{k}{\infty})\in\B^{full}$. All LTI systems admit state space representations of the form $(\sigma E+F)\ell+Gw=0$, and the smallest dimension of $\ell$ among all such representations is the state cardinality, denoted as $\n{\B}$.

Let $\tilde{w}_\bint{0}{T}$ be a trajectory in $\B_{T+1}$. A Hankel matrix of depth $L+1$ can be constructed using this trajectory as
\begin{equation}
    \Hk_{L+1}(\tilde{w})=\begin{bmatrix}
        \tilde{w}_\bint{0}{L} & \tilde{w}_\bint{1}{L+1} & \cdots & \tilde{w}_\bint{T-L}{T}
    \end{bmatrix}.
\end{equation}
If $\rank(\Hk_{L+1}(\tilde{w}))=(L+1)\mathsf{w}$, then $\tilde{w}_\bint{0}{T}$ is said to be persistently exciting of order $L+1$. If $w=(y,u)$ is an input/output partition and $\B$ is controllable (i.e., for $w^1,w^2\in\B$, there exists $w\in\B$ and $k_1,k_2\in\mathbb{T}$, $k_1\leq k_2$ such that, through the manipulation of $u$, $w_\bint{0}{k_1}=w^1_\bint{0}{k_1}$ and $w_\bint{k_2}{\infty}=w^2_\bint{k_2}{\infty}$), then a representation of $\B_{L+1}$ can be constructed using one of its measured trajectories using the \emph{fundamental lemma} \cite{Willems:2005}. Specifically, given a $(T+1)$-step measured trajectory $\tilde{w}_\bint{0}{T}$, if $\tilde{u}_\bint{0}{T}$ is persistently exciting of order $L+\n{\B}+1$, then $\cs(\Hk_{L+1}(\tilde{w}))=\B_{L+1}$. We emphasize that, in the case when the input contains both control input and disturbance, controllability is defined under the assumption that both of them as manipulable.

The fundamental lemma is only one of the possible options of behavior parameterization (see \cite{Markovsky:2023} for a detailed discussion of other possible representations). More generally, when $L\geq\lag{\B}$, the dimension of $\B_{L+1}$ is $(L+1)\mathsf{u}+\n{\B}$. This means that there exists a matrix $\F\in\sR[((L+1)\mathsf{w})]{((L+1)\mathsf{u}+\n{\B})}$, with full column rank, such that $\cs(\F)=\B_{L+1}$. In the context of receding horizon analysis, this means that, for any trajectory segment $\tilde{w}_k\coloneqq\tilde{w}_\bint{k-L^-}{k+L^+}$, there exists $g_k\in\sR{(L+1)\mathsf{u}+\n{\B}}$ such that
\begin{equation}\label{eq:fundLemma}
    \tilde{w}_k=\F g_k.
\end{equation}
Since the function of the vector $g$ is to parameterize $\tilde{w}_k$, we refer to it as the parameterizer. The possibility of having a more general representation is particularly useful when measurement noise is present, in which case the system behavior can be sufficiently accurately approximated using a large set of measured trajectories as follows.
\begin{lem}[Behavior Approximation \cite{Yan:2024a}]\label{lem:behaviorApprox}
    Let $\mathcal{W}=\{\tilde{w}^m_{i\bint{0}{T}}\}_{i=1}^N$ be a set of trajectories measured with noise, i.e., $\tilde{w}^m_{i\bint{0}{T}}=\tilde{w}_{i\bint{0}{T}}+\tilde{n}_{i\bint{0}{T}}$, where $n$ is a zero-mean white noise with covariance $\Var_n$. Denote
    \begin{equation}
        M_N=\frac{1}{N}\sum_{i=1}^N\Hk_{L+1}(\tilde{w}_i^m)\Hk_{L+1}(\tilde{w}_i^m)^\top-(T-L+1)\tilde{\Var}_n,
    \end{equation}
    where $\tilde{\Var}_n=I_{L+1}\otimes\Var_n$ with $\otimes$ being the Kronecker product. Let $U_N$ be the matrix whose columns are the eigenvectors corresponding to the largest $(L+1)\mathsf{u}+\n{\B}$ eigenvalues of $M_N$. If all input trajectories in $\mathcal{W}$ are persistently exciting of order $L+\n{\B}+1$, then
    \begin{equation}
        d(\cs(U_N),\B_{L+1})\overset{p}{\rightarrow}0,
    \end{equation}
    where $d(\cdot,\cdot)$ denotes the chordal distance between two linear subspaces and $\overset{p}{\rightarrow}$ means convergence in probability. In other words, choosing $U_N$ as $\F$ in \eqref{eq:fundLemma} is sufficiently accurate when $N$ is large. 
\end{lem}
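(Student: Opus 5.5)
The plan is to show that the sample second-moment matrix $M_N$ converges in probability to a positive semidefinite limit whose column space is exactly $\B_{L+1}$. The conclusion then follows from a perturbation bound for invariant subspaces of the Davis--Kahan type.

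\textbf{Step 1: decompose $M_N$.} Write $\Hk_{L+1}(\tilde{w}_i^m)=\Hk_{L+1}(\tilde{w}_i)+\Hk_{L+1}(\tilde{n}_i)$ and expand. This gives
\begin{equation*}
M_N=\frac{1}{N}\sum_{i=1}^N\Hk_{L+1}(\tilde{w}_i)\Hk_{L+1}(\tilde{w}_i)^\top+C_N+C_N^\top+\Bigl(\frac{1}{N}\sum_{i=1}^N\Hk_{L+1}(\tilde{n}_i)\Hk_{L+1}(\tilde{n}_i)^\top-(T-L+1)\tilde{\Var}_n\Bigr),
\end{equation*}
where $C_N=\frac{1}{N}\sum_i\Hk_{L+1}(\tilde{w}_i)\Hk_{L+1}(\tilde{n}_i)^\top$.

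Because $n$ is white with covariance $\Var_n$, each $\Hk_{L+1}(\tilde{n}_i)\Hk_{L+1}(\tilde{n}_i)^\top$ has expectation $(T-L+1)\tilde{\Var}_n$. The noise is zero mean and independent of the true trajectories, so $C_N$ has zero mean. Assuming the noise has finite fourth moments, independence across experiments, and uniformly bounded second moments of the true data, the weak law of large numbers (Chebyshev) gives that the last bracket and $C_N$ both converge to $0$ in probability.

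\textbf{Step 2: identify the range of the signal term.} Let $S_N=\frac{1}{N}\sum_i\Hk_{L+1}(\tilde{w}_i)\Hk_{L+1}(\tilde{w}_i)^\top$.

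Every column of $\Hk_{L+1}(\tilde{w}_i)$ lies in $\B_{L+1}$, so $\cs(S_N)\subseteq\B_{L+1}$. Each input $\tilde{u}_i$ is persistently exciting of order $L+\n{\B}+1$, so the fundamental lemma gives $\cs(\Hk_{L+1}(\tilde{w}_i))=\B_{L+1}$. Hence $\cs(S_N)=\B_{L+1}$, which has dimension $r=(L+1)\mathsf{u}+\n{\B}$.

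Writing $P$ for the orthogonal projector onto $\B_{L+1}$, the top $r$ eigenvectors of $S_N$ span exactly $\B_{L+1}$. The remaining eigenvalues are zero.

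\textbf{Step 3: perturbation.} We have $M_N=S_N+E_N$ with $\|E_N\|\overset{p}{\rightarrow}0$. The Davis--Kahan $\sin\Theta$ theorem then gives
\begin{equation*}
d(\cs(U_N),\B_{L+1})\le \frac{c\,\|E_N\|}{\lambda_r(S_N)-\|E_N\|}
\end{equation*}
whenever the denominator is positive.

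\textbf{Main obstacle.} The hard part is obtaining a uniform spectral gap: $\lambda_r(S_N)$ must stay bounded away from zero as $N$ grows. Persistency of excitation of each trajectory only gives $\lambda_r(S_N)>0$ for each $N$, not a uniform bound.

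To handle this, I will interpret the hypothesis as a uniform excitation condition. Either the trajectories are i.i.d. experiments, so that $S_N\to\EV[\Hk\Hk^\top]$ and the limit has range $\B_{L+1}$ with a positive $r$-th eigenvalue, or there is a deterministic $\delta>0$ with $\lambda_r(\Hk_{L+1}(\tilde{w}_i)\Hk_{L+1}(\tilde{w}_i)^\top)\ge\delta$ for all $i$. In the second case, averaging preserves the bound on $\B_{L+1}$, because each summand is $\succeq\delta P$, so $\lambda_r(S_N)\ge\delta$.

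With the gap fixed, the bound above tends to $0$ in probability, which proves the lemma.
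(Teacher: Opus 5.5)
The paper does not prove this lemma; it is quoted from \cite{Yan:2024a}, so there is no in-paper argument to compare yours against. Your route is the natural one and each step checks out: a law of large numbers for the noise-dependent terms, the fundamental lemma to get $\cs(S_N)=\B_{L+1}$, and a Davis--Kahan $\sin\Theta$ bound. Weyl's inequality also gives $\lambda_r(M_N)>\lambda_{r+1}(M_N)$, and hence a well-defined $\cs(U_N)$, once $\|E_N\|<\lambda_r(S_N)/2$. What your argument establishes, however, is the lemma under stronger hypotheses than those displayed. You are right that this strengthening is necessary rather than cosmetic.

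With persistency of excitation of each trajectory as the only excitation condition, the statement is false. The rescaled trajectories $\tilde w_i=\tilde w_1/i$ lie in $\B$ and have persistently exciting inputs. Yet $\lambda_r(S_N)=O(1/N)$ and $\|C_N\|=O_p(1/N)$, while the centred noise term is of exact order $1/\sqrt N$ by the central limit theorem. So $\sqrt N M_N$ converges in distribution to a Gaussian matrix for which, e.g.\ with Gaussian noise and $\Var_n>0$, $\B_{L+1}$ is almost surely not an invariant subspace. Hence $\cs(U_N)$ cannot converge to $\B_{L+1}$.

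Either of your repairs closes this. In the i.i.d.\ case, write $H_1=\Hk_{L+1}(\tilde w_1)$. The claim that $\EV[H_1H_1^\top]$ has range exactly $\B_{L+1}$ with a positive $r$-th eigenvalue follows from $x^\top\EV[H_1H_1^\top]x=\EV\|H_1^\top x\|^2$. This vanishes for $x\perp\B_{L+1}$ and is positive on the unit sphere of $\B_{L+1}$, because excitation holds almost surely.

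Steps 1 and 2 also rest on assumptions absent from the statement, which you should make explicit.
\begin{enumerate}
\item The noise must be independent of the noise-free data, i.e.\ the data must be collected in open loop. Under feedback, $C_N$ has a nonzero mean whose off-diagonal block relative to $\B_{L+1}$ biases the limit, and subtracting $(T-L+1)\tilde{\Var}_n$ does not remove it. The noise must also be independent across experiments.
\item The excitation hypothesis must refer to the noise-free inputs $\tilde u_i$, not to the measured inputs in $\mathcal W$. The measured inputs are generically persistently exciting even when the true inputs are not. Your Step 2 already uses the correct reading.
\item Controllability of $\B$ and $L\geq\lag{\B}$ are needed for the fundamental lemma and for $\dim\B_{L+1}=(L+1)\mathsf u+\n{\B}$. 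These are standing assumptions elsewhere in the paper.
\end{enumerate}
With these made explicit, your proof is complete.
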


In the case where $L^+=0$, it has been shown in \cite{Yan:2025} that the parameterizer $g$ has the state property and is an observable state variable of the system behavior. An observable state map can also be computed from \eqref{eq:fundLemma} as
\begin{equation}\label{eq:stateMap}
    g_k=\F^\dagger\tilde{w}_k.
\end{equation}

In the stochastic setting, the definition of dynamical system is extended to a quadruple $\Sigma=(\mathbb{T}, \mathbb{W}, \pazocal{E}, \mathbb{P})$, where $\mathbb{T}$ is the time axis, $\mathbb{W}$ is the signal space, $\pazocal{E}$ is the event space containing a $\sigma$-algebra of subsets of $\mathbb{W}^\mathbb{T}$, and $\mathbb{P}:\pazocal{E}\rightarrow [0,1]$ is the probability measure \cite{Baggio:2017}. The deterministic system can be recovered from this definition with $\pazocal{E}=\{\varnothing,\mathbb{W}^\mathbb{T},\B,\mathbb{W}^\mathbb{T}\setminus\B\}$ and $\mathbb{P}\{\B\}=1$. A stochastic system is LTI if there exists a deterministic LTI behavior $\B^{\mathrm{LTI}}$, called the fiber, such that $\pazocal{E}$ is the Borel $\sigma$-algebra generated by the open sets of $\mathbb{W}^\mathbb{T}/\B^{\mathrm{LTI}}$. All stochastic LTI systems admit kernel representations of the form
\begin{equation}\label{eq:stochasticKernel}
    R(\sigma,\sigma^{-1})\ra{w}_k=\ra{e}_k,
\end{equation}
where $R(\sigma,\sigma^{-1})w_k=0$ represents $\B^{\mathrm{LTI}}$, and $\ra{e}$ describes a stochastic system $(\mathbb{T}, \mathbb{W}_e, \pazocal{E}_e, \mathbb{P}_e)$ whose event space is the Borel $\sigma$-algebra generated by the open sets of $\mathbb{W}_e^\mathbb{T}$. A notable difference between deterministic LTI systems represented by \eqref{eq:deterministicKernel} and stochastic LTI systems represented by \eqref{eq:stochasticKernel} is that the position of the timestep $k$ in the former depends on the split of ``past'' and ``future'' segments, whereas that in the latter depends on when the uncertain variable $\ra{e}$ is introduced.

\subsection{Problem Formulation}\label{subsec:problemFormulation}
    In this paper, we consider the regulation of disturbance effect for controllable LTI systems in a general setting, in which uncertainties are present in both the manipulated variable and the exogenous disturbance. Despite the ubiquity of input uncertainties in virtually all control systems, analysis and the regulation of its effect on the control performance has largely been overlooked. For example, in linear quadratic gaussian control design, while the effect of input uncertainty on estimation error is dealt with by Kalman filters, its effect on the disturbance regulation control performance is not handled explicitly. This input uncertainty will result in the controlled system being stochastic in nature, which, combined with measurement noise, requires the control goal to be formulated in a probabilistic way. Furthermore, as will be discussed in details in the next section, attenuating the disturbance effect requires the controller to find a balance between known information (predicted expected value) and unknown error in the disturbance, which is why the primary goal of this paper is finite $\Lt$-gain stabilization even though disturbance prediction will be a known information during online implementation.
    
    Let $\Sigma$ be an LTI system whose manifest variable admits the partition $\ra{w}=(\ra{y},\ra{u},\ra{d})$, where $\ra{y}$ is the controlled variable, $\ra{u}$ is the control input and $\ra{d}$ is the exogenous disturbance. At each timestep $k$, disturbance $d_k$ is with a known mean $\EV[\ra{d}_k]$, but its actual value has a random  deviation, i.e., $\ra{d}_k=\EV[\ra{d}_k]+\Delta \ra{d}_k$, where $\Delta \ra{d}$ is an independent and identically distributed (i.i.d) zero mean stochastic process with covariance $\Var_d$. In the rest of this paper, this is denoted as $\ra{d}\in\mathcal{D}(\EV[\ra{d}],\Var_d)$, where
    \begin{equation}
        \mathcal{D}(\EV[\ra{v}],\Var_v)\coloneqq\left\{\ra{v} \middle | \begin{split}&\ra{v}=\EV[\ra{v}]+\Delta \ra{v}, \ \EV[\Delta \ra{v}_j\Delta \ra{v}_k^\top]=\Var_v\delta_{jk},\\
        & \Delta \ra{v}\text{ is an i.i.d stochastic process}\end{split}\right\},
    \end{equation}
    where $\delta_{jk}$ is the Kronecker delta. To facilitate control design, we make a mild assumption on the boundedness of the expected value of disturbance.
    \begin{asmp}\label{asmp:finiteDist}
        $\|\EV[\ra{d}_k]\|<\infty$ for all $k\in\mathbb{T}$.
    \end{asmp}
    This assumption requires the expected disturbance value to be finite for every step, which is almost always the case in practical applications. Furthermore, while the controller at each timestep decides a control action $\bar{u}_k$, the actual control action upon implementation has a random zero mean uncertainty $\Delta \ra{u}_k$ with covariance $\Var_u$, resulting in a stochastic control action $\ra{u}_k=\bar{u}_k+\Delta \ra{u}_k$. This means that the control input is a stochastic process $\ra{u}\in\mathcal{D}(\bar{u},\Var_u)$, where $\bar{u}$ is the manipulable component. We assume that all components of the manifest variable are measurable but with zero-mean measurement noise $\ra{n}_k$ whose covariance is $\Var_n$. As such, the measured variable can be expressed as
    \begin{equation}
        \ra{w}_k^m=\ra{w}_k+\ra{n}_k=\begin{bmatrix}
            \ra{y}_k \\\bar{u}_k+\Delta \ra{u}_k\\ \EV[\ra{d}_k]+\Delta \ra{d}_k
        \end{bmatrix}+\ra{n}_k.
    \end{equation}
    These measurements are used to construct Hankel matrices and obtain representation following Lemma \ref{lem:behaviorApprox}. The input uncertainty $\Delta u$, disturbance prediction error $\Delta d$ and measurement noise $n$ are assumed to be pair-wise independent.

    The aim of this paper is to achieve $\Lt$ stability for a given system described above. In the deterministic setting, this is often achieved by characterizing a worst-case $\Lt$ gain bound for the controlled system. Specifically, the goal is to control the system such that $\sup_{(\tilde{y},\tilde{d})\in\B,\tilde{d}\neq 0}\frac{\|\tilde{y}_\bint{1}{T}\|}{\|\tilde{d}_\bint{1}{T}\|}$ is finite for all $T$. In practice, finite $\Lt$-gain stabilization design is often carried out using the concept of dissipativity \cite{Willems:1972}, i.e., the construction of a storage function $V\geq0$ and a supply rate $s$ (both as functions of $\tilde{w}$, for example) such that, for the controlled system,
    \begin{equation}\label{eq:dissIneq}
        V_k-V_{k-1}\leq s_k,
    \end{equation}
for all $k$. In particular, to achieve an $\Lt$ gain bound of $\gamma$, the supply rate can be chosen as $s_k=-\|y_k\|^2+\gamma^2\|d_k\|^2$ \cite{Hill:1976}.  However, in the settings of this paper, this is not achievable for any finite $\gamma$. To illustrate, since the magnitude of the trajectory $\Delta \tilde{u}$ can be arbitrarily large, so it can be for the magnitude of $\tilde{y}$ in the controlled system. Furthermore, the control action $\bar{u}_k$ is decided based on the past trajectory measured with noise, adding another layer of uncertainty. This means that, in the stochastic setting, the $\Lt$ gain can be arbitrarily large for any disturbance trajectory with finite value, and the deterministic control setup does not apply. However, since $\Delta u$ has a finite covariance, it is almost surely bounded (which can be shown using the Chebyshev inequality). This means that it is possible to characterize the boundedness of $\Lt$ gain in a probabilistic sense, leading to the following definition of probabilistic $\Lt$ gain bound.
\begin{defn}[Probabilistic $\Lt$ Gain Bound]\label{defn:L2StabilityProb}
    Let $\Sigma=(\mathbb{T}, \mathbb{W}, \pazocal{E}, \mathbb{P})$ be a stochastic dynamical system. Partition its manifest variable $w=(y,d)$, in which $d$ is free. Denote
    \begin{equation}
        \Gamma_T=\frac{\|\tilde{y}_\bint{1}{T}\|}{\|\tilde{d}_\bint{1}{T}\|},  \label{eq:truncL2}
    \end{equation}
    and let $E_{T,\gamma}\in\pazocal{E}$ be the event
    \begin{equation}
        E_{T,\gamma}=\{\tilde{w}\mid \Gamma_T\leq \gamma\}.
    \end{equation}
    The stochastic system $\Sigma$ is said to have an $\Lt$ gain bound of $\gamma$ with probability of failure at most $p_T$ if
    \begin{equation}\label{eq:probL2Defn}
		\Pb\{E_{T,\gamma}\}\geq 1-p_T
	\end{equation}
    for all $T\in\mathbb{T}$. $\Sigma$ is said to have an ultimate $\Lt$ gain bound of $\gamma$ with probability of failure at most $p$ if
    \begin{equation}
		\lim_{T\rightarrow\infty}\Pb\{E_{T,\gamma}\}\geq 1-p.
	\end{equation}
\end{defn}
\begin{rem}
    With a slight abuse of notation, we express \eqref{eq:probL2Defn} as
    \begin{equation}
		\Pb\{\Gamma_T\leq\gamma\}\geq1-p
	\end{equation}
    in the rest of this paper for clarity of presentation.
\end{rem}
If $p$ approaches 0 as $\gamma$ approaches infinity, then we can say that the probability of instability is 0. This leads to the following definition of almost sure $\Lt$ stability.
\begin{defn}[Almost Sure $\Lt$ Stability]\label{defn:L2Stability}
	A stochastic dynamical system $\Sigma$ with manifest variable $w=(y,d)$ is almost surely $\Lt$ stable if
	\begin{equation}
		\Pb\{\Gamma_T<\infty\}=1
	\end{equation}
     for all $T\in\mathbb{T}$.
\end{defn}
This definition does not explicitly specify a universal $\Lt$ gain bound because the randomness of the manifest variable could lead to an arbitrarily large $\Lt$ gain, but it requires the bound to be almost surely finite.

\section{Probabilistic Finite $\mathcal{L}_2$-Gain Stabilization Design}\label{sec:L2Stabilization}
    As illustrated in Section \ref{subsec:background}, the control design can be carried out in the finite interval $[k-L^-,k+L^+]$ for any $L^+ \ge 0$, provided that $L^-\geq\lag{\B}$. For clarity of illustration, all subsequent developments assume that $L^+=0$, i.e., the design is carried out in the interval $[k-L,k]$, and further predictions can be carried out analogously via repeated trajectory weaving. In such a case, $L=L^-$, $\tilde{w}_k=\tilde{w}_\bint{k-L}{k}$, the kernel representation of the uncontrolled system is of the form
    \begin{equation}\label{eq:kernelAllPast}
        R(\sigma^{-1})w_k=0,
    \end{equation}
    and the vector $g$ in \eqref{eq:fundLemma} has the state property. This setup allows the control design to be carried out by controlling the outcomes of $g_k$. However, due to the presence of uncertainties and measurement noise, the exact value of the parameterizer $g_k$ cannot be obtained. In this section, we begin by presenting an optimal filtering approach for parameterizer estimation, which is structurally similar to the Kalman filter design. In addition to converging to the true value of the parameterizer optimally, the resulting estimation error covariance turns out to be also a required component in control design. Using the error covariance, as well as $\Var_u$, $\Var_d$ and $\Var_n$, a set of linear matrix inequalities (LMIs) is developed to assess the feasibility of designing a controller to achieve probabilistic finite $\Lt$-gain stability.
    
    \subsection{Parameterizer Dynamics Estimation}\label{subsec:estimation}
    Given a measured trajectory segment $\tilde{w}_{k-1}$ (which can be parameterized by $g_{k-1}$ using \eqref{eq:fundLemma}) and $d_k$, all possible solutions of $g_k$, can be obtained by solving
    \begin{equation}
        \begin{bmatrix}
            \F_{wp}\\ \F_{dk}
        \end{bmatrix}g_k=\begin{bmatrix}
            \tilde{w}_\bint{k-L}{k-1}\\d_k
        \end{bmatrix}=\begin{bmatrix}
				\Pi_p\F g_{k-1} \\ d_k
			\end{bmatrix},
    \end{equation}
     where $\Pi_p$ is a selection matrix such that $\Pi_p\tilde{w}_k=\tilde{w}_\bint{k-L+1}{k}$. This leads to the representation of $g_k$ as
	\begin{equation}\label{eq:gIter}
		\begin{split}
			g_k&=\begin{bmatrix}
				\F_{wp}\\ \F_{dk}
			\end{bmatrix}^\dagger\begin{bmatrix}
				\Pi_p\F g_{k-1} \\ d_k
			\end{bmatrix}+\begin{bmatrix}
				\F_{wp}\\ \F_{dk}
			\end{bmatrix}^\perp g^{null}_k\\
			&\eqqcolon\F_pg_{k-1}+\F_fd_k+\F_zz_k,
		\end{split}
	\end{equation}
   where $\F_z$ is an orthonormal matrix such that $\cs(\F_z)=\cs\left(\begin{bmatrix}
				\F_{wp}\\ \F_{dk}
			\end{bmatrix}^\perp\right)$. The arbitrary vector $z_k$ can be seen as a virtual manipulated variable that ``controls'' $g$. 
    Since the ultimate design goal is the finite $\Lt$-gain stabilization of the actual stochastic variable $y$ against $d$, control design should focus on their actual values. However, the uncertainties in $u$ and $d$, as well as the measurement noise, make the true values of $g_{k-1}$ and $d_k$, hence that of $y_k$, not available. Furthermore, as illustrated in Section \ref{subsec:problemFormulation}, the actual control input to the system $u_k$ is different from the implementable manipulated variable $\bar{u}_k$. As such, it is not possible to design an implementable control action based on \eqref{eq:gIter} directly. The first step is therefore to decompose the dynamics of the parameterizer into the ``deterministic'' part, which can be used to construct control actions, and the ``uncertain'' part, which the control design needs to compensate for. Since the parameterizer is an observable state of the behavior, we propose to estimate its true value by developing an optimal estimation procedure, similar to a Kalman filter. 

    Since \eqref{eq:gIter} represents the true parameterizer dynamics, the prior estimation based on the information up to the $(k-1)$-th step and the known value of $\EV[d_k]$ can be constructed as
    \begin{equation}\label{eq:prior}
        \hat{g}_{k|k-1}=\F_p\hat{g}_{k-1|k-1}+\F_f\EV[d_k]+\F_z\hat{z}_k,
    \end{equation}
with $\hat{g}_{k-1|k-1}$ the posterior update in the previous horizon (whose construction is presented later in this section) and $\hat{z}_k$ the virtual variable to be manipulated based on the available information. Note that $\hat{z}_k$ is different from $z_k$ in \eqref{eq:gIter} because they are decided based on different information. The associated dynamics for the prior error $e_{k|k-1}\coloneqq g_k-\hat{g}_{k|k-1}$ is then
    \begin{equation}\label{eq:priorError}
        e_{k|k-1}=\F_pe_{k-1|k-1}+\F_f\Delta d_k+\F_z\Delta z_k,
    \end{equation}
where $\Delta z_k=z_k-\hat{z}_k$. The estimated parameterizer in \eqref{eq:prior} leads to the prior estimation of $\tilde{w}_k$, based on which the to-be-implemented manipulated variable $\bar{u}_k$ is obtained as
    \begin{equation}\label{eq:controlAction}
    	\bar{u}_k=\Pi_u\F\hat{g}_{k|k-1},
    \end{equation}
    with $\Pi_u$ a selection matrix such that $u_k=\Pi_u\tilde{w}_k$. 
    
    Since both $z$ and $\hat{z}$ are virtual variables, the statistical properties of $\Delta z$ need to be constructed. This leads to the representation of the prior estimation as in the following lemma.
    \begin{lem}\label{lem:paramDecomp}
    The dynamics of the prior estimation error can be represented as
    \begin{equation}\label{eq:priorTransformed}
    e_{k|k-1}=\E_pe_{k-1|k-1}+\E_f\Delta d_k+\E_u\Delta u_k,
    \end{equation}
    where 
    \begin{equation}\label{eq:errorCoeff}
    \begin{split}
        \E_p&=[I-\F_z(\Pi_u\F\F_z)^{-1}\Pi_u\F]\F_p,\\
        \E_f&=[I-\F_z(\Pi_u\F\F_z)^{-1}\Pi_u\F]\F_f,\\
        \E_u&=\F_z(\Pi_u\F\F_z)^{-1}.
        \end{split}
    \end{equation}
    \end{lem}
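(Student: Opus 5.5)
The plan is to eliminate the virtual error $\Delta z_k$ in \eqref{eq:priorError}. The lever for this is that the control input component of the true trajectory at step $k$ is fixed by the implemented action \eqref{eq:controlAction} up to the input uncertainty. Concretely, the true trajectory satisfies $\tilde{w}_k=\F g_k$ by \eqref{eq:fundLemma}, so its current input entry is $u_k=\Pi_u\F g_k$. On the other hand, $u_k=\bar u_k+\Delta u_k=\Pi_u\F\hat g_{k|k-1}+\Delta u_k$. Subtracting the two expressions gives the key constraint
\begin{equation*}
\Pi_u\F e_{k|k-1}=\Delta u_k .
\end{equation*}

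Next I substitute \eqref{eq:priorError} into this constraint:
\begin{equation*}
\Pi_u\F\F_p e_{k-1|k-1}+\Pi_u\F\F_f\Delta d_k+\Pi_u\F\F_z\Delta z_k=\Delta u_k .
\end{equation*}
Assuming $\Pi_u\F\F_z$ is square and invertible, I solve for $\Delta z_k$:
\begin{equation*}
\Delta z_k=(\Pi_u\F\F_z)^{-1}\left(\Delta u_k-\Pi_u\F\F_p e_{k-1|k-1}-\Pi_u\F\F_f\Delta d_k\right).
\end{equation*}
Plugging this back into \eqref{eq:priorError} and grouping terms gives $\F_p e_{k-1|k-1}$ premultiplied by $I-\F_z(\Pi_u\F\F_z)^{-1}\Pi_u\F$, the same factor on $\F_f\Delta d_k$, and $\F_z(\Pi_u\F\F_z)^{-1}\Delta u_k$. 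These are exactly $\E_p$, $\E_f$ and $\E_u$ in \eqref{eq:errorCoeff}, so the algebra is routine once the constraint is in hand.

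The main obstacle is justifying that $\Pi_u\F\F_z$ is square and nonsingular. I would argue this from the structure of the behavior. Because $L\geq\lag{\B}$ and $g$ has the state property, fixing the past window $\tilde{w}_{|[k-L,k-1]}$ and the free current disturbance $d_k$ leaves exactly the current control input free; the output $y_k$ is then determined. Hence the null space of $\col(\F_{wp},\F_{dk})$ has dimension $\mathsf{u}$. This can be checked by a dimension count: $\dim\B_{L+1}=(L+1)\mathsf{u}+\n{\B}$ with $\F$ of full column rank, while the past window fixes $L\mathsf{u}+\n{\B}$ degrees of freedom. Consequently $\F_z$ has $\mathsf{u}$ columns. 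Moreover, the map $z\mapsto u_k=\Pi_u\F\F_z z$ must be injective: if $\Pi_u\F\F_z z=0$, then $\F\F_z z$ is a trajectory with zero past, zero $d_k$ and zero $u_k$, so it is zero because $y_k$ is determined. Since $\F$ and $\F_z$ have full column rank, this forces $z=0$. A square injective map is invertible, which is what the solve step needs.

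A secondary point is that $\F_p$ and $\F_f$ in \eqref{eq:gIter} are defined via the pseudoinverse. I only need \eqref{eq:priorError} as stated, which holds because both \eqref{eq:gIter} and \eqref{eq:prior} use the same $\F_p$, $\F_f$ and $\F_z$, so the pseudoinverse subtleties do not enter.
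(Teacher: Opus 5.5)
Your proposal is correct and follows the paper's proof essentially step for step: you derive the constraint $\Pi_u\F e_{k|k-1}=\Delta u_k$, substitute \eqref{eq:priorError}, solve for $\Delta z_k$, and back-substitute. Your injectivity argument for $\Pi_u\F\F_z$ is tighter than the paper's appeal to full row rank of $\Pi_u\F$, which on its own does not imply invertibility. You show that a direction with zero past window, zero $d_k$ and zero $u_k$ forces $y_k=0$, and then use full column rank of $\F$ and $\F_z$; this makes precise the paper's informal remark that $z_k$ is determined once $u_k$ is fixed.
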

    \begin{proof}
    By the construction in \eqref{eq:gIter}, $z_k$ does not contribute to the parameterization of $\tilde{w}_\bint{k-L}{k-1}$ or $d_k$. Since $\dim(\cs(\begin{bmatrix}
        \F_p & \F_f
    \end{bmatrix}))=L\mathsf{u}+(L+1)\mathsf{d}+\n{\B}$ while $\dim(\cs(\F))=(L+1)(\mathsf{u}+\mathsf{d})+\n{\B}$, we have that the dimension of $z_k$ is the same as $\dim(\cs(\F_z))=\mathsf{u}$. As shown by \cite[Lemma 1]{Markovsky:2008}, the combination of $\tilde{w}_\bint{k-L}{k-1}$, $d_k$ and $u_k$ uniquely specifies $y_k$. It then follows that $z_k$ can be completely specified once $u_k$ is determined. This is also true for $\Delta z_k$ and $\Delta u_k$ because their dynamics are the same, as shown by \eqref{eq:priorError}. 

    Within each horizon, $\bar{u}_k$ can be obtained from \eqref{eq:controlAction}, while the actual $u_k$ (with input uncertainty) is parameterized in a similar way with $\hat{g}_{k|k-1}$ replace by $g_k$, which is obtained from \eqref{eq:gIter} using the actual values of $g_{k-1}$, $d_k$ and $z_k$. Using \eqref{eq:priorError}
    \begin{equation}
    \begin{split}
        \Delta u_k&=u_k-\bar{u}_k\\
                  &=\Pi_u\F e_{k|k-1}\\
                  &=\Pi_u\F(\F_pe_{k-1|k-1}+\F_f\Delta d_k+\F_z\Delta z_k)
    \end{split}
    \end{equation}
    As discussed above, $\dim(\cs(\F_z))=\mathsf{u}$, and hence $\F_z\in\sR[\mathsf{g}]{\mathsf{u}}$. Due to the requirement of persistent excitation, $\Pi_u\F$ is of full row rank, which means that $\Pi_u\F\F_z$ is invertible. As such, $z_k$ is of the form
    \begin{equation}\label{eq:zDecomp}
        \Delta z_k=(\Pi_u\F\F_z)^{-1}(\Delta u_k-\Pi_u\F\F_pe_{k-1|k-1}-\Pi_u\F\F_f\Delta d_k).
    \end{equation}
    Substituting into \eqref{eq:priorError} leads to the representation of $e_{k|k-1}$ given by \eqref{eq:priorTransformed}.
    \end{proof}

    The posterior update is structurally similar to that in Kalman filter design, except that the update is based on measured manifest variable $w_k^m$ instead of only measured output, i.e.,
    \begin{equation}\label{eq:posterior}
    \begin{split}
        \hat{g}_{k|k}&=\hat{g}_{k|k-1}+\mathcal{K}_k(w_k^m-\hat{w}_{k|k-1})\\
        &=\hat{g}_{k|k-1}+\mathcal{K}_k\Pi_f\F e_{k|k-1}+\mathcal{K}_kn_k,
    \end{split}
    \end{equation}
where $\Pi_f$ selects $w_k$ from $\tilde{w}_k$. This leads to the posterior error $e_{k|k}=g_k-\hat{g}_{k|k}$ as
    \begin{equation}
        e_{k|k}=(I-\mathcal{K}_k\Pi_f\F) e_{k|k-1}-\mathcal{K}_kn_k.
    \end{equation}
    
Since $e_{k-1|k-1}$, $\Delta d_k$ and $\Delta u_k$ are pair-wise independent, the prior error covariance $\Pk_{k|k-1}=\EV[e_{k|k-1}e_{k|k-1}^\top]$ can be written as
    \begin{equation}\label{eq:KFIterStart}
        \Pk_{k|k-1}=\E_p\Pk_{k-1|k-1}\E_p^\top+\E_f\Var_d\E_f^\top+\E_u\Var_u\E_u^\top.
    \end{equation}
Here, $\Pk_{k|k}=\EV[e_{k|k}e_{k|k}^\top]$ is the posterior error covariance, which can be computed as
    \begin{equation}
        \Pk_{k|k}=(I-\mathcal{K}_k\Pi_f\F)\Pk_{k|k-1}(I-\mathcal{K}_k\Pi_f\F)^\top+\mathcal{K}_k\Var_n\mathcal{K}_k^\top.
    \end{equation}
The minimum of $\tr(\Pk_{k|k})$ can then be found, and the corresponding value of $\mathcal{K}_k$ can be solved as
    \begin{equation}\label{eq:KFGain}
        \mathcal{K}_k=\Pk_{k|k-1}\F^\top\Pi_f^\top(\Pi_f\F\Pk_{k|k-1}\F^\top\Pi_f^\top+\Var_n)^{-1},
    \end{equation}
leading to the optimal covariance
    \begin{equation}\label{eq:KFIterEnd}
        \Pk_{k|k}=(I-\mathcal{K}_k\Pi_f\F)\Pk_{k|k-1}.
    \end{equation}
Since $g$ is an observable state variable of the system whose behavior is controllable, iterating \eqref{eq:KFIterStart} -- \eqref{eq:KFIterEnd} results in $\Pk_{k|k}$ converging exponentially to a unique solution $\Pk$ \cite{Anderson:2005}, which is the solution to the following Algebraic Riccati Equation (ARE):
    \begin{equation}\label{eq:ARE}
        \Pk=\E_p\Pk\E_p^\top-\mathcal{N}+\mathcal{Q},
    \end{equation}
where
    \begin{equation}\label{eq:ARETerms}
    \begin{split}
    \mathcal{Q}&=\E_f\Var_d\E_f^\top+\E_u\Var_u\E_u^\top,\\
   \mathcal{N}&=(\mathcal{A}\Pk\E_p^\top+\mathcal{T})^\top(\mathcal{A}\Pk\mathcal{A}^\top+\mathcal{R})^{-1}(\mathcal{A}\Pk\E_p^\top+\mathcal{T}),\\
\mathcal{A}&=\Pi_f\F\E_p,\\
        \mathcal{R}&=\Pi_f\F(\E_f\Var_d\E_f^\top+\E_u\Var_u\E_u^\top)\F^\top\Pi_f^\top+\Var_n,\\
        \mathcal{T}&=\Pi_f\F(\E_f\Var_d\E_f^\top+\E_u\Var_u\E_u^\top).
        \end{split}
    \end{equation} 
    Note that, different from the conventional steady-state Kalman filter design, where the steady-state solution is for the prior estimate error covariance, the steady state solution $\Pk$ in \eqref{eq:ARE} is for the posterior update error covariance. This steady-state solution will be useful for control design, as will be illustrated in the next section.

\subsection{Probabilistic Finite $\Lt$-Gain Stabilization - the General Case}\label{subsec:dissGeneral}
We now turn our attention to the control design for probabilistic finite $\Lt$-gain stabilization. The probabilistic conditions given in Definitions \ref{defn:L2Stability} and \ref{defn:L2StabilityProb} are difficult to handle directly. Inspired by the stochastic stability design approach, in which the probability of stability can be implied by the expected rate of change of the stochastic Lyapunov function at the $k$th step conditioned on the information available up to the $(k-1)$-th step (denoted as $\mathcal{I}_{k-1}$) \cite{Kushner:1967}, the current approach develops conditions by focusing on the conditional expectation of dissipativity, i.e., the satisfaction of \eqref{eq:dissIneq} conditioned on $\I_{k-1}$. While the construction is similar to the concept of stochastic dissipativity introduced in \cite{Haddad:2022,Lanchares:2023}, we will show that a careful choice of supply rate leads to probabilistic description of the control performance. For LTI systems, it has been shown in \cite{Yan:2025} that the storage function $V$ can be chosen as a quadratic function of the parameterizer $g$, i.e., $V(g)=\|g\|_P^2$, with $P\geq0$. However, since the actual value of $\tilde{w}_{k-1}$, hence that of $g_{k-1}$, is unavailable due to the measurement noise, control design can only be carried out based on the estimated parameterizer $\hat{g}_{k-1|k-1}$ using the decomposition of $g_k$ in Lemma \ref{lem:paramDecomp}. As illustrated in Section \ref{subsec:estimation}, $\hat{g}_{k-1|k-1}$ is obtained based on the dynamic filtering procedure, which has an estimation error covariance of $\Pk_{k-1|k-1}$. Since $\hat{g}_{k|k}$ and $e_{k|k}$ are assumed to be independent for all $k\in\mathbb{W}$, we have
\begin{equation}
    \EV[\|g_k\|_P^2]=\EV[\|\hat{g}_{k|k}+e_{k|k}\|_P^2]=\|\hat{g}_{k|k}\|_P^2+\tr(P\Pk_{k|k}),
\end{equation}
from which it is clear that $\Pk_{k|k}$ will be involved in the control design. However, the time-varying nature of $\Pk_{k|k}$ makes it impractical for control design because the controller needs to be redesigned for every timestep $k$, and this design depends on the initial covariance guess $\Pk_{0|0}$. 

We now present the main result of this paper. It shows that while online updates of the parameterizer estimation follow the dynamic filtering procedure, control design to achieve probabilistic finite $\Lt$-gain stabilization can be carried out using the steady state solution $\Pk$ of the ARE \eqref{eq:ARE}, leading to a time-invariant design. 
\begin{thm}\label{thm:dissConGeneral}
        Let $\Sigma$ be an LTI system represented in the form \eqref{eq:fundLemma} with manifest variable $w=(y,u,d)$, in which $d\in\mathcal{D}(\EV[d],\Var_d)$, where $\EV[d_k]$ satisfies Assumption \ref{asmp:finiteDist}, and $u\in\mathcal{D}(\bar{u},\Var_u)$. Given real numbers $\gamma_1, \gamma_2\geq0$, if there exist matrices $W\in\Sym{\mathsf{g}}$, $X\in\Sym{\mathsf{g}}$, $Y_g\in\sR[\mathsf{u}]{\mathsf{g}}$ and $K_d\in\sR[\mathsf{u}]{\mathsf{d}}$ such that
	\begin{subequations}\label{eq:dissCons}
		\begin{align}
		  &W>0, \label{eq:PD}\\
            &\gamma_2^2\tr(\Var_d)-\tr(X)\geq0,\label{eq:offsetGeneral}\\
            &\begin{bmatrix}
                X & * & *\\
                \mathcal{N}^\frac{1}{2} & W & *\\
                \Pi_y\F\Pk^\frac{1}{2} & 0 & I 
            \end{bmatrix}\geq0,\label{eq:offset_g}\\
			&\begin{bmatrix}
				W&*&*&*\\
				0 & \gamma_1^2I & * & *\\
				\Pi_y\F W & 0 & I & *\\
				\F_pW+\F_zY_g & \F_f+\F_zK_d & 0 & W
			\end{bmatrix}\geq0,\label{eq:dissConGeneral}
		\end{align}
	\end{subequations}
    where $\mathcal{N}$ is defined in \eqref{eq:ARETerms}, and $\Pi_y$ is a selection matrix such that $y_k=\Pi_y\tilde{w}_k$, then there exists a feasible controlled behavior\footnote{Technically speaking, since the controlled system is stochastic, it does not have a controlled behavior, but rather events in the controlled system. For the clarity of illustration, we still use the term ``controlled behavior'' to refer to the set of all possible trajectories in the controlled system.} that satisfies
    \begin{equation}\label{eq:L2Prob}
			\Pb\left\{\lim_{T\rightarrow\infty}\Pb\left\{\Gamma_T\leq\gamma\right\}\geq 1-\frac{1}{\gamma^2}(\rho\gamma_1^2+(1-\rho)\gamma_2^2)\right\}=1,
		\end{equation}
        for any $\gamma\geq\sqrt{\rho\gamma_1^2+(1-\rho)\gamma_2^2}$, with
        \begin{equation}\label{eq:SNRGeneral}
            \rho=\lim_{T\rightarrow\infty}\frac{\frac{1}{T}\sum_{k=1}^T\|\EV[d_k]\|^2}{\tr(\Var_d)+\frac{1}{T}\sum_{k=1}^T\|\EV[d_k]\|^2}.
        \end{equation}
    In such a case, a corresponding solution of the prior parameterizer estimation $\hat{g}_{k|k-1}$ is given by
	\begin{equation}\label{eq:controlConstruct}
		\hat{g}_{k|k-1}=(\F_p+\F_zY_gW^{-1})\hat{g}_{k-1|k-1}+(\F_f+\F_zK_d)\EV[d_k],
	\end{equation}
where $\hat{g}_{k-1|k-1}$ is obtained using \eqref{eq:posterior} -- \eqref{eq:KFIterEnd} with $k\leftarrow k-1$. The control action to be implemented, $\bar{u}_k$, can be obtained using \eqref{eq:controlAction}.
\end{thm}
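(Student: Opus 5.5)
We would establish a conditional-expectation dissipation inequality for the storage function $V_k=\|g_k\|_P^2$ with $P=W^{-1}$, and then convert it into the probabilistic bound \eqref{eq:L2Prob} via Markov's inequality.

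\textbf{Step 1: closed-loop decomposition.} Set $K_g=Y_gW^{-1}$, so that \eqref{eq:controlConstruct} corresponds to $\hat z_k=K_g\hat g_{k-1|k-1}+K_d\EV[d_k]$ in \eqref{eq:prior}. Writing $g_{k-1}=\hat g_{k-1|k-1}+e_{k-1|k-1}$ and using Lemma~\ref{lem:paramDecomp}, the true parameterizer is $g_k=\hat g_{k|k-1}+e_{k|k-1}$. Here $\hat g_{k|k-1}=(\F_p+\F_zK_g)\hat g_{k-1|k-1}+(\F_f+\F_zK_d)\EV[d_k]$ is $\I_{k-1}$-measurable, and $e_{k|k-1}$ is zero-mean, independent of $\I_{k-1}$, and has covariance $\Pk_{k|k-1}$.

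\textbf{Step 2: deterministic dissipation.} Congruence of \eqref{eq:dissConGeneral} with $\diag(W^{-1},I,I,W^{-1})$, followed by a Schur complement, gives the matrix inequality $M^\top PM+\diag((\Pi_y\F)^\top\Pi_y\F,0)\le\diag(P,\gamma_1^2I)$, where $M=[\F_p+\F_zK_g\ \ \F_f+\F_zK_d]$. Hence $\|\hat g_{k|k-1}\|_P^2-\|\hat g_{k-1|k-1}\|_P^2\le -\|\Pi_y\F\hat g_{k|k-1}\|^2+\gamma_1^2\|\EV[d_k]\|^2$.

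\textbf{Step 3: stochastic residual.} Taking expectations conditioned on $\I_{k-1}$, $\EV[V_k-V_{k-1}+\|y_k\|^2\mid\I_{k-1}]$ is bounded by the deterministic part above plus the terms $\tr(P\Pk_{k|k-1})-\tr(P\Pk_{k-1|k-1})+\tr(\Pi_y\F\Pk_{k|k-1}\F^\top\Pi_y^\top)$. Passing to steady state, the ARE \eqref{eq:ARE} gives $\Pk_{k|k-1}-\Pk_{k|k}\to\mathcal N$, so the residual tends to $\tr(P\mathcal N)+\tr(\Pi_y\F\Pk\F^\top\Pi_y^\top)$. The exact prior/posterior bookkeeping needs care here; the pairing of $\mathcal N$ with $\Pi_y\F\Pk^{1/2}$ in \eqref{eq:offset_g} indicates the intended split. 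By the Schur complement of \eqref{eq:offset_g}, $X\ge \mathcal N^{1/2}W^{-1}\mathcal N^{1/2}+\Pk^{1/2}\F^\top\Pi_y^\top\Pi_y\F\Pk^{1/2}$, so the residual is at most $\tr(X)\le\gamma_2^2\tr(\Var_d)$ by \eqref{eq:offsetGeneral}. Exponential convergence of $\Pk_{k|k}$ \cite{Anderson:2005} makes the transient contribute only $o(T)$ to the sums.

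\textbf{Step 4: summation and Markov's inequality.} Summing over $k=1,\dots,T$, telescoping, and using $V\ge0$ yields $\EV\|\tilde y_{|[1,T]}\|^2\le\gamma_1^2\sum_k\|\EV[d_k]\|^2+\gamma_2^2T\tr(\Var_d)+O(1)+o(T)$. The denominator satisfies $\EV\|\tilde d_{|[1,T]}\|^2=\sum_k\|\EV[d_k]\|^2+T\tr(\Var_d)$, and since $\Delta d$ is i.i.d., the strong law of large numbers gives $\frac1T\|\tilde d_{|[1,T]}\|^2\to\tr(\Var_d)+\lim\frac1T\sum_k\|\EV[d_k]\|^2$ almost surely. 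Dividing by this limit, the normalized bound becomes $\rho\gamma_1^2+(1-\rho)\gamma_2^2$. Applying Markov's inequality to $\|\tilde y\|^2$ at threshold $\gamma^2\|\tilde d\|^2$, with the denominator asymptotically deterministic, then gives \eqref{eq:L2Prob}; the outer ``$\Pb\{\cdot\}=1$'' reflects this almost-sure convergence.

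\textbf{Main obstacle.} The hardest step is the coupling between $\|\tilde y\|^2$ and the random denominator, together with the prior/posterior covariance accounting in Step 3. For the first, we would rely on the almost-sure convergence of the normalized denominator, so that Markov's inequality only ever needs to be applied to the numerator.
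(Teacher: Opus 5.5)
Your overall route is the paper's. It uses conditional-expectation dissipation for a quadratic storage in $g$, with \eqref{eq:dissConGeneral} controlling the estimate part. The ARE converts the covariance terms into something \eqref{eq:offsetGeneral}--\eqref{eq:offset_g} bound, exponential convergence of $\Pk_{k|k}$ handles the transient, and Markov plus the strong law finish. Checking the LMIs forward instead of through Lemma~\ref{lem:QFbehaviorOffset} is fine for this sufficiency statement. However, Step 2 does not follow as written, because of the choice $P=W^{-1}$.

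Let $\Phi\coloneqq[\F_p+\F_zK_g\ \ \F_f+\F_zK_d]$. Your inequality $\Phi^\top W^{-1}\Phi+\diag(\F^\top\Pi_y^\top\Pi_y\F,0)\le\diag(W^{-1},\gamma_1^2I)$, evaluated at $(\hat g_{k-1|k-1},\EV[d_k])$, puts the output penalty on the \emph{first} block. So it gives $-\|\Pi_y\F\hat g_{k-1|k-1}\|^2$; the claimed $-\|\Pi_y\F\hat g_{k|k-1}\|^2$ does not follow. Step 3 has a matching slip. With the output taken at time $k$, the output term is $\tr(\Pi_y\F\Pk_{k|k-1}\F^\top\Pi_y^\top)$. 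The steady-state prior covariance is $\E_p\Pk\E_p^\top+\mathcal{Q}=\Pk+\mathcal N$, so this term tends to $\tr(\Pi_y\F(\Pk+\mathcal N)\F^\top\Pi_y^\top)$, not to $\tr(\Pi_y\F\Pk\F^\top\Pi_y^\top)$. With $P=W^{-1}$ the residual would therefore exceed what \eqref{eq:offset_g} controls by $\tr(\Pi_y\F\mathcal N\F^\top\Pi_y^\top)$.

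The repair is the paper's identification, $P=W^{-1}-\F^\top\Pi_y^\top\Pi_y\F$; the paper writes $M=P+\F^\top\Pi_y^\top\Pi_y\F$ and $W=M^{-1}$.
\begin{itemize}
\item $P\ge0$ follows from the $(1,1)$ block of the Schur complement of \eqref{eq:dissConGeneral}.
\item The evaluated matrix inequality becomes exactly your Step 2 claim, $\|\hat g_{k|k-1}\|_P^2+\|\Pi_y\F\hat g_{k|k-1}\|^2\le\|\hat g_{k-1|k-1}\|_P^2+\gamma_1^2\|\EV[d_k]\|^2$, because the $\|\Pi_y\F\hat g_{k-1|k-1}\|^2$ terms cancel.
\item The residual becomes $\tr(P\mathcal N)+\tr(\Pi_y\F(\Pk+\mathcal N)\F^\top\Pi_y^\top)=\tr(W^{-1}\mathcal N)+\tr(\Pi_y\F\Pk\F^\top\Pi_y^\top)\le\tr(X)$. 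This is precisely the split you inferred from \eqref{eq:offset_g}, and it is the bookkeeping you flagged.
\end{itemize}
Equivalently, you may keep $P=W^{-1}$ but use the shifted supply $-\|y_{k-1}\|^2$. There the posterior $\Pk$ genuinely appears, and after telescoping you use $\|y_T\|^2\le V_T$.

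With this correction Steps 3--4 go through. Step 4 is the paper's Markov/strong-law argument. Your device of replacing the normalized denominator by its almost-sure limit before applying Markov to the numerator alone is, if anything, cleaner than the paper's step, which applies Markov with the random threshold $\gamma^2\sum_k\|d_k\|^2$ directly.
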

\begin{proof}
    We begin by explaining the rationale behind the construction of the feasible controlled behavior. Let
\begin{equation}
\begin{split}
    \ra{V}_k&\coloneqq\|\ra{g}_k\|_P^2,\\
    \ra{s}_k&\coloneqq-\|\ra{y}_k\|^2+\gamma_1^2\|\EV[\ra{d}_k]\|^2+\gamma_2^2\|\Delta \ra{d}_k\|^2,
\end{split}
\end{equation}
where $P\geq 0$, and consider the expectation of $\ra{V}_{k-1}-\ra{V}_k+\ra{s}_k$ conditioned on $\I_{k-1}$. Using Lemma \ref{lem:paramDecomp}, we have
    \begin{align}\label{eq:dissRateDecomp}
            &\EV[\ra{V}_{k-1}-\ra{V}_k+\ra{s}_k\mid \I_{k-1}]\nonumber \\
            = \ &\EV[\|\ra{g}_{k-1}\|_P^2-\|\ra{g}_k\|_P^2-\|\ra{y}_k\|^2\nonumber \\
            &\hskip 0.5cm +\gamma_1^2\|\EV[\ra{d}_k]\|^2+\gamma_2^2\|\Delta \ra{d}_k\|^2\mid\I_{k-1}]\nonumber \\
            = \ &\EV[\|g_{k-1}\|_P^2-\|g_k\|_{P+\F^\top\Pi_y^\top\Pi_y\F}^2\nonumber \\
            &\hskip 0.5cm +\gamma_1^2\|\EV[d_k]\|^2+\gamma_2^2\|\Delta d_k\|^2\mid\I_{k-1}]\nonumber \\
            = \ &\EV[\|\hat{g}_{k-1|k-1}+e_{k-1|k-1}\|_{M-\F^\top\Pi_y^\top\Pi_y\F}^2\nonumber \\
            &\hskip 0.5cm -\|\hat{g}_{k|k-1}+e_{k|k-1}\|_M^2\nonumber \\
            &\hskip 0.5cm +\gamma_1^2\|\EV[d_k]\|^2+\gamma_2^2\|\Delta d_k\|^2]\nonumber \\
            = \ & \|\hat{g}_{k-1|k-1}\|_{M-\F^\top\Pi_y^\top\Pi_y\F}^2+\gamma_1^2\|\EV[d_k]\|^2\nonumber \\
            & \hskip 0.5cm -\|\F_p\hat{g}_{k-1|k-1}+\F_f\EV[d_k]+\F_z\hat{z}_k\|_M^2\nonumber \\
            & \hskip 0.5cm +\EV[\|e_{k-1|k-1}\|_{M-\F^\top\Pi_y^\top\Pi_y\F}^2]+\gamma_2^2\EV[\|\Delta d_k\|^2]\nonumber \\
            &\hskip 0.5cm -\EV[\|\E_pe_{k-1|k-1}+\E_f\Delta d_k+\E_u\Delta u_k\|_M^2]\nonumber \\
        = \ &\|\hat{g}_{k-1|k-1}\|_{M-\F^\top\Pi_y^\top\Pi_y\F}^2+\gamma_1^2\|\EV[d_k]\|^2\nonumber \\
        & \hskip 0.5cm -\|\F_p\hat{g}_{k-1|k-1}+\F_f\EV[d_k]+\F_z\hat{z}_k\|_M^2\nonumber \\
        & \hskip 0.5cm +\EV[\|e_{k-1|k-1}\|_{M-\E_p^\top M\E_p-\F^\top\Pi_y^\top\Pi_y\F}^2]\nonumber \\
        & \hskip 0.5cm +\EV[\|\Delta d_k\|_{\gamma_2^2I-\E_f^\top M\E_f}^2]-\EV[\|\Delta u_k\|_{\E_u^\top M\E_u}^2]\nonumber \\
        = \ &\|\hat{g}_{k-1|k-1}\|_{M-\F^\top\Pi_y^\top\Pi_y\F}^2+\gamma_1^2\|\EV[d_k]\|^2\nonumber \\
        & \hskip 0.5cm -\|\F_p\hat{g}_{k-1|k-1}+\F_f\EV[d_k]+\F_z\hat{z}_k\|_M^2\nonumber \\
        & \hskip 0.5cm +\tr((M-\E_p^\top M\E_p-\F^\top\Pi_y^\top\Pi_y\F)\Pk_{k-1|k-1})\nonumber \\
        & \hskip 0.5cm +\tr((\gamma_2^2I-\E_f^\top M\E_f)\Var_d)-\tr((\E_u^\top M\E_u)\Var_u)\nonumber \\
        = \ &\|\hat{g}_{k-1|k-1}\|_{M-\F^\top\Pi_y^\top\Pi_y\F}^2+\gamma_1^2\|\EV[d_k]\|^2\nonumber \\
        & \hskip 0.5cm -\|\F_p\hat{g}_{k-1|k-1}+\F_f\EV[d_k]+\F_z\hat{z}_k\|_M^2\nonumber \\
        &\hskip 0.5cm + \gamma_2^2\tr(\Var_d)-\tr(\F^\top\Pi_y^\top\Pi_y\F\Pk)\nonumber \\
        &\hskip 0.5cm +\tr(M(\Pk-\E_p\Pk\E_p^\top-\E_f\Var_d\E_f^\top-\E_u\Var_u\E_u^\top))\nonumber \\
        &\hskip 0.5cm +\tr((M-\E_p^\top M\E_p-\F^\top\Pi_y^\top\Pi_y\F)(\Pk_{k-1|k-1}-\Pk)),
   \end{align}
    where
    \begin{equation}\label{eq:PMTrans}
        M=P+\F^\top\Pi_y^\top\Pi_y\F.
    \end{equation}
    Since $\Pk$ satisfies the ARE \eqref{eq:ARE}, we have
    \begin{equation}\label{eq:matrixNEquiv}
        \Pk-\E_p\Pk\E_p^\top-\E_f\Var_d\E_f^\top-\E_u\Var_u\E_u^\top=-\mathcal{N}.
    \end{equation}
    Define
    \begin{equation}
        \alpha_k=\tr((M-\E_p^\top M\E_p-\F^\top\Pi_y^\top\Pi_y\F)(\Pk_{k-1|k-1}-\Pk)).
    \end{equation}
    Since $\Pk_{k|k}$ converges to $\Pk$ exponentially, we have
    \begin{equation}
    \begin{split}
        &\sum_{k=1}^\infty\alpha_k\\
        = \ &\tr\left((M-\E_p^\top M\E_p-\F^\top\Pi_y^\top\Pi_y\F)\sum_{k=1}^\infty(\Pk_{k-1|k-1}-\Pk)\right)\\
        < \ &\infty.
    \end{split}
    \end{equation}
   In addition, if $M-\F^\top\Pi_y^\top\Pi_y\F\geq0$ (which is implied by \eqref{eq:dissConGeneral}), then $P\geq0$, hence $V\geq0$. In such a case, if there exists $\hat{z}_k$ (as a function of $\hat{g}_{k-1|k-1}$ and $\EV[d_k]$) such that
    \begin{equation}\label{eq:dissTransform}
        \begin{split}
        &\|\hat{g}_{k-1|k-1}\|_{M-\F^\top\Pi_y^\top\Pi_y\F}^2+\gamma_1^2\|\EV[d_k]\|^2\\
        & \hskip 0.5cm -\|\F_p\hat{g}_{k-1|k-1}+\F_f\EV[d_k]+\F_z\hat{z}_k\|_M^2\\
        &\hskip 1.2cm + \gamma_2^2\tr(\Var_d)-\tr(M\mathcal{N}+\F^\top\Pi_y^\top\Pi_y\F\Pk)\geq0
        \end{split}
    \end{equation}
    for all $\hat{g}_{k-1|k-1}$ and $\EV[d_k]$, then the feasible controlled behavior satisfies
    \begin{equation}
        \EV[V_k\mid\I_{k-1}]-V_{k-1}\leq \EV[s_k\mid\I_{k-1}]-\alpha_k.
    \end{equation}
    Using the law of total expectation, we have
    \begin{equation}
        \EV[V_k]-\EV[V_{k-1}]\leq -\EV[\|y_k\|^2]+\gamma_1^2\|\EV[d_k]\|^2+\gamma_2^2\|\Delta d_k\|^2-\alpha_k.
    \end{equation}
    Summing from 1 to $T$ and rearranging, we have
    \begin{equation}\label{eq:yNormBound}
		\begin{split}
			&\EV\left[\sum_{k=1}^T\|y_k\|^2\right]\\
            \leq \ &\EV[V_0]-\EV[V_N]+\sum_{k=1}^T\gamma_1^2\|\EV[d_k]\|^2+\gamma_2^2\|\Delta d_k\|^2-\alpha_k\\
			\leq \ & \EV[V_0]+T\gamma^2_2\tr(\Var_d)+\sum_{k=1}^T\gamma_1^2\|\EV[d_k]\|^2-\alpha_k.
		\end{split}
	\end{equation}
    Using Markov's inequality,
	\begin{equation}\label{eq:probBound}
		\begin{split}
			&\Pb\left\{\sum_{k=1}^T\|y_k\|^2\geq \gamma^2\sum_{k=1}^T\|d_k\|^2\right\}\\
			\leq \ &\frac{\EV\left[\sum_{k=1}^T\|y_k\|^2\right]}{\gamma^2\sum_{k=1}^T\|d_k\|^2}\\
			\leq \ &\frac{\EV[V_0]+T\gamma^2_2\tr(\Var_d)+\sum_{k=1}^T(\gamma_1^2\|\EV[d_k]\|^2-\alpha_k)}{\gamma^2\sum_{k=1}^T(\|\EV[d_k]\|^2+\|\Delta d_k\|^2+2\EV[d_k]^\top\Delta d_k)}\\
            = \ & \frac{\frac{\EV[V_0]}{T}+\gamma^2_2\tr(\Var_d)+\frac{1}{T}\sum_{k=1}^T(\gamma_1^2\|\EV[d_k]\|^2-\alpha_k)}{\frac{\gamma^2}{T}\sum_{k=1}^T(\|\EV[d_k]\|^2+\|\Delta d_k\|^2+2\EV[d_k]^\top\Delta d_k)}.
		\end{split}
	\end{equation}
    Since both $\EV[V_0]$ and $\sum_{k=1}^\infty\alpha_k$ are finite, we have
    \begin{equation}
        \lim_{T\rightarrow\infty}\frac{\EV[V_0]}{T}=0, \ \lim_{T\rightarrow\infty}\frac{1}{T}\sum_{k=1}^T\alpha_k=0.
    \end{equation}
    Furthermore, since $\|\EV[d_k]\|$ is bounded for every $k$, 
    \begin{equation}
    \begin{split}
        \lim_{T\rightarrow\infty}\frac{1}{T}\sum_{k=1}^T\|\EV[d_k]\|^2&\leq\lim_{T\rightarrow\infty}\frac{1}{T}(T\max_k\|\EV[d_k]\|^2)\\
        &=\max_k\|\EV[d_k]\|^2<\infty.
    \end{split}
    \end{equation}
    This means that the limits of both the numerator and the denominator of the last line of \eqref{eq:probBound} exist, and we have
    \begin{equation}\label{eq:innerProbLim}
        \begin{split}
            &\lim_{T\rightarrow\infty}\Pb\left\{\sum_{k=1}^T\|y_k\|^2\geq \gamma^2\sum_{k=1}^T\|d_k\|^2\right\}\\
            \leq \ & \lim_{T\rightarrow\infty}\frac{\gamma^2_2\tr(\Var_d)+\frac{\gamma_1^2}{T}\sum_{k=1}^T\|\EV[d_k]\|^2}{\frac{\gamma^2}{T}\sum_{k=1}^T(\|\EV[d_k]\|^2+\|\Delta d_k\|^2+2\EV[d_k]^\top\Delta d_k)}.
        \end{split}
    \end{equation}
    As such,
	\begin{align}
			&\Pb\left\{\lim_{T\rightarrow\infty}\Pb\left\{\sum_{k=1}^T\|y_k\|^2\geq \gamma^2\sum_{k=1}^T\|d_k\|^2\right\}\right.\nonumber \\
            &\hskip 3cm \left.\phantom{\sum_{k=1}^T}\leq\frac{1}{\gamma^2}(\rho\gamma_1^2+(1-\rho)\gamma_2^2)\right\}\nonumber \\
			= \ &\Pb\left\{\lim_{T\rightarrow\infty}\Pb\left\{\sum_{k=1}^T\|y_k\|^2\geq \gamma^2\sum_{k=1}^T\|d_k\|^2\right\}\right.\nonumber \\
			&\hskip 1.5cm\left.\leq\lim_{T\rightarrow\infty}\frac{\gamma^2_2\tr(\Var_d)+\frac{\gamma_1^2}{T}\sum_{k=1}^T\|\EV[d_k]\|^2}{\gamma^2(\tr(\Var_d)+\frac{1}{T}\sum_{k=1}^T\|\EV[d_k]\|^2)}\right\}\nonumber \\
			\geq \ &\Pb\left\{\eqref{eq:innerProbLim} \text{ is true}\phantom{\sum_{k=1}^T}\right.\nonumber \\
			&\hskip 1cm\text{ and } \nonumber \\
			&\hskip 0.5cm\lim_{T\rightarrow\infty}\frac{1}{T}\sum_{k=1}^T\|\Delta d_k\|^2=\tr(\Var_d)\nonumber \\
			&\hskip 1cm\text{ and } \nonumber \\
			&\hskip 0.5cm\left.\lim_{T\rightarrow\infty}\frac{1}{T}\sum_{k=1}^T\EV[d_k]^\top\Delta d_k=0\right\}.
	\end{align}
	Since the above three statements are independent and all have probability 1 (by \eqref{eq:innerProbLim} for the first statement and by the strong law of large numbers for the last two), the overall probability is also 1. In other words, if \eqref{eq:dissTransform} can be satisfied for all $\hat{g}_{k-1|k-1}$ and $\EV[d_k]$ by choosing the appropriate value of $\hat{z}_k$, then there exists a feasible controlled behavior that satisfies \eqref{eq:L2Prob}.
    
    The next part of the proof requires the following lemma.
	\begin{lem}\label{lem:QFbehaviorOffset}
		Suppose that $v=(v_1,v_2)$ satisfies the inequality 
		\begin{equation}\label{eq:QFbehaviorOffset}
			\begin{bmatrix}
			    v_1\\ v_2
			\end{bmatrix}^\top\begin{bmatrix}
                Q & S\\ S^\top & -R
            \end{bmatrix} \begin{bmatrix}
			    v_1\\ v_2
			\end{bmatrix}+\begin{bmatrix}
			    \eta\\ \mu
			\end{bmatrix}^\top \begin{bmatrix}
			    v_1\\ v_2
			\end{bmatrix}+\beta\geq0,
		\end{equation}
		where $R\geq0$. The component $v_1$ is free if and only if there exist $K\in\sR[\mathsf{v}_2]{\mathsf{v}_1}$ and $\xi\in\sR{\mathsf{v}_2}$ such that
		\begin{equation}\label{eq:gainQF}
			v_2=Kv_1+\xi
		\end{equation}
		satisfies \eqref{eq:QFbehaviorOffset} for all $v_1$. Furthermore, if $\mu=0$, then $\beta\geq 0$ and $\xi$ can be chosen as 0.
	\end{lem}
	\begin{proof}
        See Appendix \ref{appx:proofQFbehaviorOffset}
	\end{proof}
        
        Using Lemma \ref{lem:QFbehaviorOffset}, \eqref{eq:dissTransform} is of the form \eqref{eq:QFbehaviorOffset} without the linear term, with $v_1=\col(\hat{g}_{k-1|k-1},\EV[d_k])$ and $\beta=\gamma_2^2\tr(\Var_d)-\tr(M\mathcal{N}+\F^\top\Pi_y^\top\Pi_y\F\Pk)$. Therefore, $\hat{g}_{k-1|k-1}$ and $\EV[d_k]$ are free if and only if
        \begin{equation}\label{eq:offsetTotal}
            \gamma_2^2\tr(\Var_d)-\tr(M\mathcal{N}+\F^\top\Pi_y^\top\Pi_y\F\Pk)\geq0,
        \end{equation}
        and there exist $K_g$ and $K_d$ such that
        \begin{equation}\label{eq:zkForm}
            \hat{z}_k=K_g\hat{g}_{k-1|k-1}+K_d\EV[d_k]
        \end{equation}
        satisfies
        \begin{equation}
			\begin{split}
				&\|\hat{g}_{k-1|k-1}\|_{M-\F^\top\Pi_y^\top\Pi_y\F}^2+\gamma_1^2\|\EV[d_k]\|^2\\
				&\hskip0.9cm-\|\F_p\hat{g}_{k-1|k-1}+\F_f\EV[d_k]+\F_z\hat{z}_k\|_M^2\geq0
			\end{split}
			\label{eq:offsetReplace}
		\end{equation}
	for all $\hat{g}_{k-1|k-1}$ and $\EV[d_k]$. Since $\mathcal{N}\geq0$ and $\Pk\geq0$, using the cyclic property of trace, Eq. \eqref{eq:offsetTotal} is equivalent to the existence of a matrix $X$ such that \eqref{eq:offsetGeneral} is satisfied and
    \begin{equation}\label{eq:slack}
        X\geq \mathcal{N}^\frac{1}{2}M\mathcal{N}^\frac{1}{2}+\Pk^\frac{1}{2}\F^\top\Pi_y^\top\Pi_y\F\Pk^\frac{1}{2}.
    \end{equation}
    If $M>0$ (which becomes Condition \eqref{eq:PD}), taking Schur complement with respect to $M$ and defining $W=M^{-1}>0$ leads to Condition \eqref{eq:offset_g}. For \eqref{eq:offsetReplace}, substituting \eqref{eq:zkForm} gives
    \begin{equation}\label{eq:nominalDiss}
			\begin{split}
				&\|\hat{g}_{k-1|k-1}\|_{M-\F^\top\Pi_y^\top\Pi_y\F}^2+\gamma_1^2\|\EV[d_k]\|^2\\
				&\hskip0.2cm-\|(\F_p+\F_zK_g)\hat{g}_{k-1|k-1}+(\F_f+\F_zK_d)\EV[d_k]\|_M^2\geq0,
			\end{split}
		\end{equation}
    which is required to hold for all $\hat{g}_{k-1|k-1}$ and $\EV[d_k]$. This means that
    \begin{equation}
		\begin{split}
		&\begin{bmatrix}
			M&*\\
			0&\gamma_1^2I
		\end{bmatrix}-[*]^\top\begin{bmatrix}
			I&0\\ 0 &M
		\end{bmatrix}
		\begin{bmatrix}
			\Pi_y\F & 0\\
			\F_p+\F_zK_g & \F_f+\F_zK_d
		\end{bmatrix}\geq0.
		\end{split}
    \end{equation}
Using Schur complement, performing congruence transformation using $\diag(W,I,I,I)$, and defining $Y_g=K_gW$ give \eqref{eq:dissConGeneral}.

In summary, if the conditions in \eqref{eq:dissCons} are satisfied, then there exist an $M>0$ and an associated choice of $\hat{z}_k$ in the form \eqref{eq:zkForm} such that \eqref{eq:dissTransform} is satisfied for all $\hat{g}_{k-1|k-1}$ and $\EV[d_k]$. This implies the existence of a feasible controlled behavior satisfying \eqref{eq:L2Prob}. Finally, using the solutions of $W$, $Y_g$ and $K_d$, the value of the $\hat{g}_{k|k-1}$ can be obtained by substituting \eqref{eq:zkForm} into \eqref{eq:prior}, which gives \eqref{eq:controlConstruct}.
\end{proof}

\begin{rem}
    Theorem \ref{thm:dissConGeneral} shows that the proposed approach naturally integrates the estimation algorithm in Section \ref{subsec:estimation} into the control design. This can be more clearly observed from the last equality of \eqref{eq:dissRateDecomp}. Firstly, the exponential convergence of $\Pk_{k|k}$ allows ${\alpha_k}$ to be a convergent series, making it possible for offline control design to be developed using the steady-state covariance $\Pk$, while online estimation still uses the optimal error covariance $\Pk_{k|k}$. Secondly, due to the presence of the term $M\Pk$, \eqref{eq:dissRateDecomp} cannot be converted into a convex condition in general. However, $\Pk$ satisfies the ARE nicely organizes \eqref{eq:dissRateDecomp} into the form \eqref{eq:dissTransform}, which can be converted into a set of LMIs that are readily solvable by existing toolboxes.
\end{rem}

The result of Theorem~\ref{thm:dissConGeneral} shows that the final probabilistic performance is affected by several factors. We now discuss them individually.
\subsubsection{Probabilistic $\Lt$ gain $\gamma$}
The inner probability in \eqref{eq:L2Prob} shows that, if the conditions in \eqref{eq:dissCons} are satisfied, then, for any given gain bound $\gamma$, the probability of violating it is inversely proportional to $\gamma^2$. In fact, a closer observation of \eqref{eq:probBound} shows that, for any value of $N$, the probability of violating $\Lt$ gain bound of $\gamma$ as $\gamma\rightarrow\infty$ is zero. This is also true in the limiting case because both $\gamma_1$ and $\gamma_2$ are finite constants, and $\rho\in[0,1]$. In other words, the satisfaction of \eqref{eq:dissCons} will guarantee almost sure $\Lt$ stability of the controlled system in the sense of Definition \ref{defn:L2Stability}. However, for any fixed value of $\gamma$, there is a non-zero probability of failure, which implies that, for a given trajectory $(\tilde{y},\tilde{d})$ in the controlled system, the worst-case of the ratio $\frac{\|\tilde{y}_\bint{0}{T}\|}{\|\tilde{d}_\bint{0}{T}\|}$ among all $T$ can be arbitrarily large (albeit with arbitrarily low chance). In particular, for small values of $T$, the potentially large values of $\EV[V_0]$ (caused by large value of $g_0$) and $-\alpha_k$ (caused by underestimation of $\Pk$ from $\Pk_{0|0}$) would require a large value of $\gamma$ to bring the probability of failure in \eqref{eq:probBound} to a reasonably small value. These results are not unexpected because $\Lt$ gain is a worst-case condition among all admissible trajectories $(\tilde{y},\tilde{d})$, which could be large initially given the stochastic nature of the system and the measurement noise. Furthermore, the design approach leads to a nested probabilistic result on the $\Lt$ stability of the controlled system. The outer probability is due to the strong law of large numbers, which is well known to converge to 1 rapidly. This means that, for small values of $T$, it is possible that the inner probability cannot be achieved (due to the combined effects of $\Delta d$, $\Delta u$ and the estimation error), but for large values of $T$, the inner probability can almost surely be guaranteed regardless of the distribution of $d_k$. In other words, the satisfaction of \eqref{eq:dissCons} leads to an ultimate probabilistic $\Lt$ gain bound almost surely. As will be demonstrated in the illustrative example in Section \ref{sec:example}, the effect of the outer probability is inconsequential for sufficiently long trajectories. The subsequent discussions will therefore primarily focus on the interpretation of the inner probability.

\subsubsection{Design parameters $\gamma_1$ and $\gamma_2$}
In the general case, these two parameters reflect the performance of two different components in \eqref{eq:dissTransform}. The meaning of $\gamma_1$ can be analyzed from \eqref{eq:controlConstruct}. Taking the expectation on both sides shows that the transition from $\hat{g}_{k-1|k-1}$ to $\hat{g}_{k|k-1}$ is the same as that from $\EV[g_{k-1}]$ to $\EV[g_k]$. Therefore, if \eqref{eq:nominalDiss} is valid for all $\hat{g}_{k-1|k-1}$, then it holds when $\hat{g}_{k-1|k-1}=\EV[g_{k-1}]$. This means that \eqref{eq:nominalDiss} implies the condition 
\begin{equation}\label{eq:EVDiss}
        \|\EV[g_k]\|_P^2-\|\EV[g_{k-1}]\|_P^2\leq -\|\EV[y_k]\|^2+\gamma_1^2\|\EV[d_k]\|^2,
    \end{equation}
in the controlled system. In other words, the $\Lt$ gain of the behavior of the expected manifest variable is bounded by $\gamma_1$. From the perspective of the dynamical system, this gain has another meaning. Suppose that the uncontrolled behavior admits the representation \eqref{eq:kernelAllPast}. Using the state map \eqref{eq:stateMap}, the behavior of the controller constructed from \eqref{eq:controlAction} and \eqref{eq:controlConstruct} can be written as
\begin{equation}
\begin{split}
    u_k&=\Pi_u\F(\F_p+\F_zY_gW^{-1})(\F^\dagger\tilde{w}_{k-1}-e_{k-1|k-1})\\
    &\hskip 1.2cm +\Pi_u\F(\F_f+\F_zK_d)(d_k-\Delta d_k)+\Delta u_k.
\end{split}
\end{equation}
After rearranging, this can be brought to the form
\begin{equation}
    R_c(\sigma^{-1})w_k=M_c\epsilon_k,
\end{equation}
where $\epsilon_k=\col(e_{k-1|k-1},\Delta u_k, \Delta d_k)$. As a result, the controlled system, being the interconnection of the uncontrolled system and the controller, admits the common behavior, i.e., all trajectories satisfying
\begin{equation}
    \begin{bmatrix}
        R(\sigma^{-1})\\ R_c(\sigma^{-1})
    \end{bmatrix}w_k=\begin{bmatrix}
        0\\ M_c
    \end{bmatrix}\epsilon_k.
\end{equation}
This shows that the controlled system is a stochastic LTI system whose representation is of the form \eqref{eq:stochasticKernel}. Since $\EV[\epsilon_k]=0$, taking the expectation on both sides gives the controlled behavior of $\EV[w]$, in which $\EV[y]$ and $\EV[d]$ are the manifest variables and $\EV[u]$ is the latent variable. The $\Lt$ gain on the expected manifest variable  therefore implies the $\Lt$ stability of the fiber of the controlled system.

The meaning of $\gamma_2$ can be explored from \eqref{eq:dissRateDecomp}. Specifically, from the second-last equality of \eqref{eq:dissRateDecomp}, combining with Lemma \ref{lem:QFbehaviorOffset}, it is apparent that achieving the desired performance requires
\begin{equation}
\begin{split}
    &\gamma_2^2\tr(\Var_d)-\tr(M(\F_f\Var_d\F_f^\top+\F_u\Var_u\F_u^\top))\\
    &\hskip 0.5cm+((M-\F_p^\top M\F_p-\F^\top\Pi_y^\top\Pi_y\F)\Pk_{k-1|k-1})\geq0.
    \end{split}
\end{equation}
Rearranging, using \eqref{eq:KFIterStart} and substituting \eqref{eq:PMTrans}, the above inequality can be translated to
\begin{equation}\label{eq:errorDiss}
\begin{split}
    &\EV[\|e_{k|k-1}\|_P^2]-\EV[\|e_{k-1|k-1}\|_P^2]\\
    &\hskip 2cm\leq -\EV[\|\Pi_y\F e_{k|k-1}\|^2]+\gamma_2^2\EV[\|\Delta d_k\|^2],
    \end{split}
\end{equation}
i.e., it reflects the sensitivity of the prior estimation error covariance of $y_k$ against the covariance of $\Delta d_k$. The difference between $\EV[\|e_{k|k-1}\|_P^2]$ and $\EV[\|e_{k-1|k-1}\|_P^2]$ is governed by \eqref{eq:priorTransformed}, which means this difference may be positive or negative, and its magnitude depends on $e_{k-1|k-1}$. By having a small $e_{k-1|k-1}$, both these terms, as well as $\EV[\|\Pi_y\F e_{k|k-1}\|^2]$ (whose sign is negative and therefore contributes negatively to the magnitude of $\gamma_2$), can be maintained at a relatively small magnitude, allowing $\gamma_2$ to have a small value. This is in alignment with our choice of an optimal filtering algorithm, which aims to minimize $\EV[\|e_{k|k}\|^2]$.

While $\gamma_1$ and $\gamma_2$ do not appear in the same inequality, their values are implicitly linked by matrix $M$ in a conflicting way. On the one hand, \eqref{eq:nominalDiss} needs to hold for all $\hat{g}_{k-1|k-1}$ when $\EV[d_k]=0$, leading to the inequality
\begin{equation}
    M-(\F_p+\F_zK_g)^\top M (\F_p+\F_zK_g)-\F^\top\Pi_y^\top\Pi_y\F\geq0,
\end{equation}
which means that the quadratic term of $\hat{g}_{k-1|k-1}$ (the dominating term for any given value of $\EV[d_k]$ in \eqref{eq:nominalDiss}) is non-negative. Having larger eigenvalues in $M$ would lead to a larger positive value for this quadratic term for any given value of $\hat{g}_{k-1|k-1}$, allowing for a smaller value of $\gamma_1$. On the other hand, \eqref{eq:offsetTotal} suggests that a smaller value of $\gamma_2$ can be obtained if the eigenvalues of $M$ are small because reducing the magnitude of $\tr(M\mathcal{N})$ leads to a smaller negative contributing factor. This conflict can be interpreted as a trade-off between the bounds for ``nominal'' performance (represented by $\gamma_1$) and probabilistic ``robust'' performance (represented by $\gamma_2$), which are a pair of conflicting requirements. As such, while their values are given in Theorem \ref{thm:dissConGeneral}, the design can also be carried out by fixing one of them while optimizing the other one.
    
\subsubsection{Weighting $\rho$}
As shown in \eqref{eq:L2Prob}, the probability of the actual gain is a function of the weighted average of the two $\gamma_1^2$ and $\gamma_2^2$, and the weighting is given by the parameter $\rho\in[0,1]$. This weighting reflects the relative magnitude between $\|\EV[d_k]\|^2$ and $\EV[\|\Delta d_k\|^2]$, and is changing over time. The two extreme cases are $\rho=1$, which corresponds to the case with $\Var_d=0$, i.e., the disturbance has a deterministic profile, and $\rho=0$, which corresponds to the case when $\EV[d]\equiv0$, i.e., the disturbance has zero mean. For the case with $\rho=1$, substituting $\EV[\|\Delta d_k\|^2]=0$ into \eqref{eq:errorDiss} and taking expectation on both sides leads to the inequality
\begin{equation}
    \EV[\|e_k\|_P^2]-\EV[\|e_{k-1}\|_P^2]\leq -\EV[\|\Pi_y\F e_k\|^2],
\end{equation}
i.e., it requires the error covariance of $y_k$ to asymptotically converge to 0, which is impossible under the input uncertainty $\Delta u_k$ and the measurement noise $n_k$. It is possible to solve \eqref{eq:PD} and \eqref{eq:dissConGeneral} only, giving a performance bound of $\gamma_1$ on the expected trajectories, but it is not possible to obtain a probabilistic performance bound on the actual system trajectories. The case where $\rho=0$ reduces the problem to a special case of the next section and will be further discussed later.

Another feature of \eqref{eq:L2Prob} is that one can obtain a probabilistic $\Lt$ bound for any desired level of probability by choosing specific values of $\gamma$. In particular, we have the following corollary.
\begin{cor}\label{cor:L2ProbBound}
    For the setup in Theorem \ref{thm:dissConGeneral}, if the conditions in \eqref{eq:dissCons} are satisfied with $\gamma_1=\gamma_2=\gamma\sqrt{p}$, then the feasible controlled behavior satisfies \begin{equation}\label{eq:L2ProbBound}
			\Pb\left\{\lim_{T\rightarrow\infty}\Pb\left\{\Gamma_T\leq\gamma\right\}\geq 1-p\right\}=1.
		\end{equation}
\end{cor}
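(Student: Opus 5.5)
The plan is to specialize Theorem~\ref{thm:dissConGeneral} directly. Suppose the conditions in \eqref{eq:dissCons} hold with $\gamma_1=\gamma_2=\gamma\sqrt{p}$. The theorem then gives a feasible controlled behavior, built from \eqref{eq:controlConstruct} and \eqref{eq:controlAction}, that satisfies \eqref{eq:L2Prob} for every $\gamma'\geq\sqrt{\rho\gamma_1^2+(1-\rho)\gamma_2^2}$. Here $\gamma'$ denotes the free gain level in \eqref{eq:L2Prob}, to keep it apart from the fixed $\gamma$ in the corollary.

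The key step is the convex combination. Because $\gamma_1^2=\gamma_2^2=\gamma^2p$, we get
\begin{equation*}
\rho\gamma_1^2+(1-\rho)\gamma_2^2=\gamma^2p
\end{equation*}
for any value of $\rho$. This removes the dependence on the disturbance statistics through $\rho$ in \eqref{eq:SNRGeneral}. One caveat: $\rho$ must be well defined, meaning the limit exists. The proof of Theorem~\ref{thm:dissConGeneral} only needs boundedness under Assumption~\ref{asmp:finiteDist}. If the limit fails to exist, one can repeat the argument with $\limsup$, and the equal weights make the combination invariant anyway.

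Next I check admissibility and substitute. I assume $p\leq 1$, since $p>1$ makes the statement trivial. Then $\gamma\geq\gamma\sqrt{p}=\sqrt{\rho\gamma_1^2+(1-\rho)\gamma_2^2}$, so choosing $\gamma'=\gamma$ is allowed. The bound inside \eqref{eq:L2Prob} becomes
\begin{equation*}
1-\frac{1}{\gamma^2}\,\gamma^2p=1-p,
\end{equation*}
which is exactly \eqref{eq:L2ProbBound}.

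There is no real obstacle here; this is bookkeeping. The only points to state carefully are the edge cases. When $p=0$ we need $\gamma_1=\gamma_2=0$, and the theorem still applies formally, although \eqref{eq:offsetGeneral} will then typically be infeasible. The other point is noting that $\rho\in[0,1]$ never enters the final expression.
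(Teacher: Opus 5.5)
Your proposal is correct and takes the same approach as the paper. The paper gives no separate proof: it gets the corollary by substituting $\gamma_1=\gamma_2=\gamma\sqrt{p}$ into \eqref{eq:L2Prob}, so that $\rho\gamma_1^2+(1-\rho)\gamma_2^2=p\gamma^2$ for every $\rho$ (cf.\ Remark~\ref{rem:gainsGeneralize}), exactly as you do, and your admissibility check $\gamma\geq\gamma\sqrt{p}$ for $p\leq 1$ is the only extra bookkeeping. Your side remark is also right: with equal weights, the ratio in \eqref{eq:innerProbLim} tends to $p$ even if the Ces\`aro limit defining $\rho$ fails to exist, which is a small robustness gain over the theorem's own argument.
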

\begin{rem}\label{rem:gainsGeneralize}
    The condition for $\gamma_1$ and $\gamma_2$ in Corollary \ref{cor:L2ProbBound} is not the only way to achieve \eqref{eq:L2ProbBound}. In fact, any combinations of $\gamma_1$ and $\gamma_2$ such that 
    \begin{equation}\label{eq:L2Tradeoff}
        \rho\gamma_1^2+(1-\rho)\gamma_2^2=p\gamma^2
    \end{equation}
with $\rho$ given by \eqref{eq:SNRGeneral}, achieve the design goal. However, since the trajectories of $\EV[d]$ cannot be obtained \emph{a priori}, it is not possible to compute the value of $\rho$, and Corollary \ref{cor:L2ProbBound} gives the only condition that is independent of $\rho$. As will be discussed in the next section, more solutions are possible with additional information on $\EV[d]$.
\end{rem}

Corollary \ref{cor:L2ProbBound} provides a way to verify the feasibility of a given performance bound and a probability of success. By doing so, \eqref{eq:L2ProbBound} can be integrated into optimization-based control designs (e.g., stochastic economic data-driven predictive control) as a chance constraint. Specifically, this chance constraint can be achieved by transforming \eqref{eq:dissTransform} into the condition
\begin{equation}\label{eq:optimizationCond}
    \begin{bmatrix}
        \psi & *\\
        \F_p\hat{g}_{k-1|k-1}+\F_f\EV[d_k]+\F_z\hat{z}_k & W
    \end{bmatrix}\geq0,
\end{equation}
where
\begin{equation}
\begin{split}
    &\psi=\|\hat{g}_{k-1|k-1}\|_{M-\F^\top\Pi_y^\top\Pi_y\F}^2\\
    & \hskip 1.5cm +p\gamma^2(\|\EV[d_k]\|^2+\tr(\Var_d))\\
    & \hskip 2.5cm -\tr(M\mathcal{N}+\F^\top\Pi_y^\top\Pi_y\F\Pk).
\end{split}
\end{equation}
Note that this condition is convex in the decision variable $\hat{z}_k$.

While Theorem \ref{thm:dissConGeneral} and Corollary \ref{cor:L2ProbBound} can serve different purposes, their constructions on the expected value are mathematically similar. In particular, once a control design procedure is developed to achieve \eqref{eq:L2Prob}, a corresponding procedure is immediately available to achieve \eqref{eq:L2ProbBound} by replacing $\gamma$ by $\gamma\sqrt{p}$. As such, subsequent developments in this paper will focus on the design to achieve probabilistic finite $\Lt$-gain stability given by \eqref{eq:L2Prob}.

\subsection{Probabilistic Finite $\Lt$-Gain Stabilization - the Case with Constant Disturbance Mean}
A common case of the problem setup is when the mean of disturbance is constant, i.e., $\EV[d_k]=\bar{d}$ for all $k\in\mathbb{T}$. A practical example of this scenario is a cooling water stream in a chemical process with a temperature set at a constant value but fluctuating in reality. In such a case, this constant mean can be incorporated into the offline design stage as a known constant rather than a variable. This leads to the control design summarized by the following theorem.
\begin{thm}\label{thm:dissConConstDist}
    Let $\Sigma$ be an LTI system represented in the form \eqref{eq:fundLemma} with the manifest variable $w=(y,u,d)$, in which $d\in\mathcal{D}(\EV[d],\Var_d)$ with $\EV[d_k]=\bar{d}, \ \forall k\in\mathbb{T}$ and $u\in\mathcal{D}(\bar{u},\Var_u)$. Given real numbers $\gamma_1, \gamma_2\geq0$, if there exist matrices $W\in\Sym{\mathsf{g}}$, $X\in\Sym{\mathsf{g}}$, $Y\in\sR[\mathsf{u}]{\mathsf{g}}$ and a vector $\xi\in\sR{\mathsf{u}}$ such that \eqref{eq:PD} and \eqref{eq:offset_g} are satisfied, and
    \begin{equation}\label{eq:dissConsFixedDisturbance}
        \begin{bmatrix}
				\phi & * & *&*\\
				0&W&*&*\\
				0&\Pi_y\F W & I & *\\
				\F_f\bar{d} +\F_z\xi & \F_pW+\F_zY & 0 & W
			\end{bmatrix}\geq0,
    \end{equation}
    where
    \begin{equation}\label{eq:constMeanOffset}
        \phi=\gamma_1^2\|\bar{d}\|^2+\gamma_2^2\tr(\Var_d)-\tr(X),
    \end{equation}
    then there exists a feasible controlled behavior that satisfies \eqref{eq:L2Prob} with
    \begin{equation}\label{eq:SNRCostDist}
        \rho=\frac{\|\bar{d}\|^2}{\tr(\Var_d)+\|\bar{d}\|^2},
    \end{equation}
    and a corresponding solution of $\hat{g}_{k|k-1}$ is given by
	\begin{equation}\label{eq:controlConstructConstDist}
		\hat{g}_{k|k-1}=(\F_p+\F_zYW^{-1})\hat{g}_{k-1|k-1}+\F_z\xi,
	\end{equation}
    which leads to $\bar{u}_k$ using \eqref{eq:controlAction}.
\end{thm}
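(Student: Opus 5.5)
We follow the proof of Theorem \ref{thm:dissConGeneral} almost verbatim. The change is that $\EV[d_k]=\bar{d}$ is now a known constant, so it enters the controller as an affine offset rather than as a free variable.

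\textbf{Step 1: reduction to \eqref{eq:dissTransform}.} Take the same storage function $V_k=\|g_k\|_P^2$ and the same supply rate $s_k$ as in Theorem \ref{thm:dissConGeneral}, with $M=P+\F^\top\Pi_y^\top\Pi_y\F$. The decomposition \eqref{eq:dissRateDecomp}, the ARE identity \eqref{eq:matrixNEquiv} and the summability of $\alpha_k$ all carry over unchanged. Hence it suffices to find $\hat{z}_k$, as a function of $\hat{g}_{k-1|k-1}$ only, such that \eqref{eq:dissTransform} holds with $\EV[d_k]=\bar{d}$ for all $\hat{g}_{k-1|k-1}$.

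Once this is done, the Markov-inequality and strong-law-of-large-numbers argument of Theorem \ref{thm:dissConGeneral} goes through word for word. The averages $\frac{1}{T}\sum_k\|\EV[d_k]\|^2$ now equal $\|\bar{d}\|^2$ for every $T$, so $\rho$ in \eqref{eq:SNRGeneral} reduces to \eqref{eq:SNRCostDist}. Assumption \ref{asmp:finiteDist} holds trivially.

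\textbf{Step 2: the affine controller and the offset term.} In \eqref{eq:dissTransform} with $\EV[d_k]=\bar{d}$ fixed, the only free variable is $v_1=\hat{g}_{k-1|k-1}$. The terms $\gamma_1^2\|\bar{d}\|^2+\gamma_2^2\tr(\Var_d)-\tr(M\mathcal{N}+\F^\top\Pi_y^\top\Pi_y\F\Pk)$ are now constants. The term $\F_f\bar{d}$ inside the weighted norm produces a linear term in $v_1$. This is exactly the situation of Lemma \ref{lem:QFbehaviorOffset} with $\mu\neq0$, so we seek an affine law
\begin{equation*}
\hat{z}_k=K\hat{g}_{k-1|k-1}+\xi ,
\end{equation*}
which, substituted into \eqref{eq:prior}, gives \eqref{eq:controlConstructConstDist} with $Y=KW$.

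The constant trace term is handled as before. Bound it using a slack $X\ge \mathcal{N}^{1/2}M\mathcal{N}^{1/2}+\Pk^{1/2}\F^\top\Pi_y^\top\Pi_y\F\Pk^{1/2}$, which by a Schur complement with $W=M^{-1}>0$ is \eqref{eq:offset_g}. Then $\phi$ in \eqref{eq:constMeanOffset} is a lower bound for all the constant terms.

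\textbf{Step 3: LMI form.} With the affine law, \eqref{eq:dissTransform} is implied by
\begin{equation*}
\begin{bmatrix}1\\ \hat{g}\end{bmatrix}^\top\left(\begin{bmatrix}\phi&0\\0&M-\F^\top\Pi_y^\top\Pi_y\F\end{bmatrix}-[*]^\top M\begin{bmatrix}\F_f\bar{d}+\F_z\xi & \F_p+\F_zK\end{bmatrix}\right)\begin{bmatrix}1\\ \hat{g}\end{bmatrix}\ge0
\end{equation*}
for all $\hat{g}$. This holds for all $\hat{g}$ if and only if the homogenized matrix in the middle is positive semidefinite. Here we use the usual homogenization argument: scale $\hat{g}$ and take a limit to cover the entries where the first coordinate is $0$.

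Moving $\Pi_y\F$ from the $M-\F^\top\Pi_y^\top\Pi_y\F$ block into a separate Schur block and taking Schur complements gives a $4\times4$ block matrix. Applying the congruence $\diag(1,W,I,I)$ and setting $Y=KW$ then yields exactly \eqref{eq:dissConsFixedDisturbance}. Since $W^{-1}$ is known after solving, $K=YW^{-1}$, which completes the construction.

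\textbf{Main obstacle.} The delicate step is the homogenization. The general case could invoke Lemma \ref{lem:QFbehaviorOffset} with $\mu=0$, but here the linear term is nonzero. We must therefore justify that positive semidefiniteness of the augmented matrix is equivalent to nonnegativity over all $\hat{g}$ (sufficiency is all that is needed), and that $\xi$ can be absorbed linearly. The absorption works because $\xi$ enters only through $\F_f\bar{d}+\F_z\xi$ in the off-diagonal block, which keeps the condition an LMI.
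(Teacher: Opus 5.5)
Your proposal is correct and follows the paper's proof step for step: reduction to \eqref{eq:dissTransform} with $\EV[d_k]=\bar d$, the affine law $\hat z_k=K\hat g_{k-1|k-1}+\xi$ suggested by Lemma~\ref{lem:QFbehaviorOffset}, the slack $X$ producing $\phi$, then Schur complement and the congruence $\diag(1,W,I,I)$. The only difference is that you reach the augmented $2\times2$ block inequality directly by homogenization (where only the trivial sufficiency direction is needed), whereas the paper reaches the same matrix by explicit minimization over $\hat g_{k-1|k-1}$, using $L\geq0$, the range condition and a generalized Schur complement. One small correction: substituting the affine law into \eqref{eq:prior} actually gives $\hat g_{k|k-1}=(\F_p+\F_zYW^{-1})\hat g_{k-1|k-1}+\F_f\bar d+\F_z\xi$, consistent with the offset block $\F_f\bar d+\F_z\xi$ in \eqref{eq:dissConsFixedDisturbance}, so it does not literally reproduce the displayed \eqref{eq:controlConstructConstDist}, whose omission of the $\F_f\bar d$ term you should not carry over.
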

\begin{proof}
    The proof is the same as that of Theorem \ref{thm:dissConGeneral} until \eqref{eq:dissTransform} with $\EV[d_k]=\bar{d}$. Since $\bar{d}$ is constant in this case instead of a variable, \eqref{eq:dissTransform} is of the form \eqref{eq:QFbehaviorOffset} having nonzero linear terms, with $v_1=\hat{g}_{k-1|k-1}$ and $v_2=\hat{z}_k$. By Lemma \ref{lem:QFbehaviorOffset}, the virtual manipulated variable in this case should be of the form 
    \begin{equation}
        \hat{z}_k=K\hat{g}_{k-1|k-1}+\xi.
    \end{equation}
    Substituting into \eqref{eq:dissTransform} gives
     \begin{equation}
        \begin{split}
        &\|\hat{g}_{k-1|k-1}\|_{M-\F^\top\Pi_y^\top\Pi_y\F}^2\\
        & \hskip 0.4cm -\|(\F_p+\F_zK)\hat{g}_{k-1|k-1}+\F_f\bar{d}+\F_z\xi\|_M^2\\
        &\hskip 0.8cm +\gamma_1^2\|\bar{d}\|^2+ \gamma_2^2\tr(\Var_d)-\tr(M\mathcal{N}+\F^\top\Pi_y^\top\Pi_y\F\Pk)\geq0.
        \end{split}
    \end{equation}
    Similarly to \eqref{eq:slack}, the existence of a matrix $X$ satisfying \eqref{eq:offset_g} allows the above inequality to be written as
    \begin{equation}\label{eq:offsetSubstituted}
        \begin{split}
        &\|\hat{g}_{k-1|k-1}\|_{\Omega}^2-\|\Xi\hat{g}_{k-1|k-1}+\F_f\bar{d}+\F_z\xi\|_M^2+\phi\geq0,
        \end{split}
    \end{equation}
    where 
    \begin{equation}
		\Omega\coloneqq M-\F^\top\Pi_y^\top\Pi_y\F, \ \Xi\coloneqq\F_p+\F_zK.
    \end{equation}
    To ensure that $\hat{g}_{k-1|k-1}$ is free, there must exist a lower bound with respect to $\hat{g}_{k-1|k-1}$ (as a function of $\xi$) that satisfies \eqref{eq:offsetSubstituted}, which is equivalent to
	\begin{equation}\label{eq:existence}
		L\coloneqq \Omega-\Xi^\top M\Xi\geq0,\ L_\perp\Xi^\top M(\F_f\bar{d}+\F_z\xi)=0.
	\end{equation}
	The principle solution for the minimum to occur is
    \begin{equation}
        \hat{g}_{k-1|k-1}=L^\dagger\Xi^\top M(\F_f\bar{d}+\F_z\xi),
    \end{equation}
    with the minimum value
	\begin{equation}
		\begin{split}
			&\|L^\dagger\Xi^\top M(\F_f\bar{d}+\F_z\xi)\|_\Omega^2\\
            &\hskip 1cm-\|(I+\Xi L^\dagger\Xi^\top M)(\F_f\bar{d}+\F_z\xi)\|_M^2+\phi\\
			= \ &\phi-\|\F_f\bar{d}+\F_z\xi\|_M^2-\|\Xi^\top M(\F_f\bar{d}+\F_z\xi)\|_{L^\dagger}^2,
		\end{split}
	\end{equation}
	with $\phi$ given by \eqref{eq:constMeanOffset}. This minimum needs to be non-negative to ensure the implementability of the controlled behavior. With conditions in \eqref{eq:existence}, the non-negativity of the above can be transformed to
	\begin{equation}
		\begin{bmatrix}
			\phi-\|\F_f\bar{d}+\F_z\xi\|_M^2&*\\
			-\Xi^\top M(\F_f\bar{d}+\F_z\xi)&L
		\end{bmatrix}\geq0.
	\end{equation}
	Substituting in the expressions for $\Omega$, $\Xi$ and $L$, the above inequality can be rearranged to the form
	\begin{equation}
		\begin{split}
		&\begin{bmatrix}
			\phi&*\\
			0 & M
		\end{bmatrix}-[*]^\top\begin{bmatrix}
			I&0\\ 0 &M
		\end{bmatrix}
		\begin{bmatrix}
			0 & \Pi_y\F \\
			\F_f\bar{d}+\F_z\xi & \F_p+\F_zK
		\end{bmatrix}\geq0.
		\end{split}
	\end{equation}
    Taking the Schur complement and applying the congruence transformation with $\diag(1, W, I, I)$ lead to \eqref{eq:dissConsFixedDisturbance}. Lastly, since $\sum_{k=1}^T\|\EV[d_k]\|^2=N\|\bar{d}\|^2$, the parameter $\rho$ in \eqref{eq:SNRGeneral} specializes to \eqref{eq:SNRCostDist}.
\end{proof}
\begin{rem}
    While $\bar{d}$ does not appear in \eqref{eq:controlConstructConstDist}, its information is encoded in $\xi$ through \eqref{eq:dissConsFixedDisturbance} and therefore used implicitly.
\end{rem}
\begin{rem} \label{rem: opm gamma}
In the case of constant disturbance mean, the parameter $\rho$ is independent of the time horizon $N$ and can be computed \emph{a priori}. As such, all solutions satisfying \eqref{eq:L2Tradeoff} can be obtained, and the performance requirement becomes a trade-off between $\gamma_1$ and $\gamma_2$ during offline design. Since the primary goal is to optimize the performance in \eqref{eq:L2Prob}, $\rho\gamma_1^2+(1-\rho)\gamma_2^2$ can be used as an objective function to be minimized during offline design. As shown by the relationship in Remark \ref{rem:gainsGeneralize}, this will lead to the lowest probability of failure $p$ given any performance requirement $\gamma$ (or the lowest value of $\gamma$ for any given $p$ if the goal is to achieve the best possible performance for a given confidence level). 
\end{rem}
\begin{rem}
While the design is for constant disturbance mean, this approach also applies to the case when the disturbance mean profile is piece-wise constant with values within a finite set. In such a case, a set of controllers can be designed offline and implemented according to the online value.
\end{rem}

Compared with the result of the general case given by Theorem \ref{thm:dissConGeneral}, one of the most distinct differences in Theorem \ref{thm:dissConConstDist} is that $\gamma_1$ and $\gamma_2$ are in the same inequality rather than in two different ones. This is because the constant $\bar{d}$ in this case can be treated as an \emph{a priori} known constant rather than a variable. As such, the design of $\gamma_1$ is specific to $\bar{d}$ rather than to all $\EV[d_k]$, making it a component of $\beta$ rather than $v_1$ in \eqref{eq:QFbehaviorOffset}. This also means that $\gamma_2$ does not need to accommodate the worst-case $\Lt$ gain among all trajectories of $\EV[d]$, making it possible to achieve a smaller value of $\gamma_1$ compared with that in Theorem \ref{thm:dissConGeneral}.  

A special case of this scenario is when $\bar{d}=0$, which corresponds to the case when $\rho=0$. In such a case, the choice of $\gamma_1$ is no longer meaningful, and the design on the fiber (i.e., controlled system considering $\EV[y]$) becomes a stabilization problem. However, for the actual trajectories, it is still possible to characterize its probabilistic performance. This leads to the following corollary.
\begin{cor}
    For the setup in Theorem \ref{thm:dissConConstDist} with $\bar{d}=0$, if, for $\gamma_2\geq0$, there exist matrices $W\in\Sym{\mathsf{g}}$, $X\in\Sym{\mathsf{g}}$ and $Y\in\sR[\mathsf{u}]{\mathsf{g}}$ such that \eqref{eq:PD} -- \eqref{eq:offset_g} are satisfied, and
    \begin{equation}\label{eq:dissConsZeroDisturbance}
        \begin{bmatrix}
				W&*&*\\
				\Pi_y\F W & I & *\\
				\F_pW+\F_zY & 0 & W
			\end{bmatrix}\geq0,
    \end{equation}
    then there exists a feasible controlled behavior with $\EV[y]$ being asymptotically stable, i.e., $\lim_{k\rightarrow\infty}\|\EV[y_k]\|=0$, and
    \begin{equation}\label{eq:L2ProbZeroDist}
			\Pb\left\{\lim_{T\rightarrow\infty}\Pb\left\{\Gamma_T\leq\gamma\right\}\geq 1-\frac{\gamma_2^2}{\gamma^2}\right\}=1.
		\end{equation}
    A corresponding implementing value of $\bar{u}_k$ can be obtained from \eqref{eq:controlAction} with 
    \begin{equation}\label{eq:controlConstructZeroDist}
		\hat{g}_{k|k-1}=(\F_p+\F_zYW^{-1})\hat{g}_{k-1|k-1}.
	\end{equation}
\end{cor}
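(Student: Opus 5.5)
This corollary should follow by specializing Theorem \ref{thm:dissConConstDist} to $\bar{d}=0$ and then extracting asymptotic stability of the mean from the nominal part of the dissipation inequality.

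\textbf{Reduction to Theorem \ref{thm:dissConConstDist}.} With $\bar{d}=0$ the linear term $\F_f\bar{d}$ in \eqref{eq:dissTransform} vanishes. The inequality then has the form \eqref{eq:QFbehaviorOffset} with $\mu=0$, so by the last assertion of Lemma \ref{lem:QFbehaviorOffset} we may take $\xi=0$. Setting $\xi=0$ in \eqref{eq:dissConsFixedDisturbance}, its first row and column become $\phi$ together with zeros. Hence \eqref{eq:dissConsFixedDisturbance} splits into two conditions: the scalar condition $\phi=\gamma_2^2\tr(\Var_d)-\tr(X)\geq0$, which is exactly \eqref{eq:offsetGeneral}, and the remaining $3\times3$ block, which is exactly \eqref{eq:dissConsZeroDisturbance}. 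Conversely, \eqref{eq:PD}--\eqref{eq:offset_g} together with \eqref{eq:dissConsZeroDisturbance} imply that \eqref{eq:dissConsFixedDisturbance} holds with $\xi=0$. Theorem \ref{thm:dissConConstDist} therefore applies with $\rho=0$ from \eqref{eq:SNRCostDist}, and $\gamma_1$ plays no role. This yields \eqref{eq:L2ProbZeroDist}, with the controller \eqref{eq:controlConstructConstDist} reducing to \eqref{eq:controlConstructZeroDist}. One point to check is that $\rho=0$ leaves the Markov/strong-law step of Theorem \ref{thm:dissConGeneral} valid: there the denominator limit is $\gamma^2\tr(\Var_d)$, and the cross term vanishes identically.

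\textbf{Asymptotic stability of $\EV[y]$.} This is the step I expect to be the main obstacle, because the LMI only gives a non-strict inequality. Taking expectations in \eqref{eq:controlConstructZeroDist} gives $\EV[g_k]=A_c\EV[g_{k-1}]$ with $A_c=\F_p+\F_zK$ and $K=YW^{-1}$, as in the discussion around \eqref{eq:EVDiss}. Applying the Schur complement to \eqref{eq:dissConsZeroDisturbance} and a congruence with $W^{-1}=M$ gives
\[
M-A_c^\top MA_c-\F^\top\Pi_y^\top\Pi_y\F\geq0 .
\]
So $V=\|\EV[g_k]\|_M^2$ satisfies $V_k-V_{k-1}\leq-\|\EV[y_k]\|^2$, where $\EV[y_k]=\Pi_y\F\EV[g_k]$; this is \eqref{eq:EVDiss} with $\gamma_1$ removed. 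Summing over $k$ yields $\sum_k\|\EV[y_k]\|^2\leq V_0<\infty$, and therefore $\|\EV[y_k]\|\rightarrow0$.

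This summability route avoids needing strict Schur stability of $A_c$. It does require interpreting $\EV[y_k]$ as $\Pi_y\F\EV[g_k]$, which holds because $\EV[\Delta u_k]=\EV[\Delta d_k]=0$ and the estimation error has zero mean. If a stronger statement is wanted, I would add observability of the pair $(\Pi_y\F,A_c)$, which holds because $g$ is an observable state. LaSalle's invariance principle would then upgrade the conclusion to convergence of $\EV[g_k]$ itself.
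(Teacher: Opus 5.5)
Your proposal is correct and follows essentially the paper's route. With $\xi=0$, \eqref{eq:dissConsFixedDisturbance} decouples into \eqref{eq:offsetGeneral} and \eqref{eq:dissConsZeroDisturbance}, Theorem \ref{thm:dissConConstDist} applies with $\rho=0$, and \eqref{eq:EVDiss} with $\EV[d_k]=0$ gives stability of the mean; your telescoping argument just makes the paper's one-line Lyapunov remark explicit. One minor index slip: with $\|\EV[g_k]\|_M^2$ as storage, the LMI yields a decrease of $\|\EV[y_{k-1}]\|^2$, not $\|\EV[y_k]\|^2$ (the latter needs the storage $\|\EV[g_k]\|_P^2$ with $P=M-\F^\top\Pi_y^\top\Pi_y\F\geq0$, as in \eqref{eq:EVDiss}), but either version gives $\sum_k\|\EV[y_k]\|^2<\infty$.
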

\begin{proof}
    Since $\bar{d}=0$, \eqref{eq:dissConsFixedDisturbance} has solutions if and only if solutions exist when $\xi=0$. In such a case, \eqref{eq:dissConsFixedDisturbance} is equivalent to the combination of \eqref{eq:offsetGeneral} and \eqref{eq:dissConsZeroDisturbance}. In other words, \eqref{eq:controlConstructZeroDist} implements a controlled behavior satisfying \eqref{eq:L2Prob}. Furthermore, $\bar{d}=0$ implies that $\rho=0$, which specializes \eqref{eq:L2Prob} to \eqref{eq:L2ProbZeroDist}.

    The asymptotic stability of $\EV[y]$, can be shown directly by setting $\EV[d_k]$ in \eqref{eq:EVDiss} to 0, making $\|\EV[g]\|_P^2$ a Lyapunov function for the controlled expected behavior.
\end{proof}
\subsection{The Complete Design and Implementation Procedure}
The offline design and online implementation procedure resulting from the above results are summarized in Procedure \ref{proc:DDPC}. The procedure treats $\gamma_1$ and $\gamma_2$ as given values, but they can also be decision variables to be optimized while solving \eqref{eq:dissCons} or \eqref{eq:dissConsFixedDisturbance}, as discussed in the previous sections. During online implementation, if the control is optimization-based, then the computation of $\hat{g}_{k|k-1}$ in lines \ref{step:onlinePriorStart} -- \ref{step:onlinePriorEnd} should use \eqref{eq:optimizationCond} instead of \eqref{eq:controlConstruct} (or \eqref{eq:controlConstructConstDist} in the case of constant disturbance mean) to decide $\hat{z}_k$ first, followed by using \eqref{eq:prior}. As discussed in Section \ref{subsec:dissGeneral}, while the offline design uses the solution to the ARE \eqref{eq:ARE} as the estimation error covariance, the online implementation procedure still uses the dynamic optimal filtering algorithm described in Section \ref{subsec:estimation} to obtain the actual estimation error covariance. 

\begin{figure}[!tb]
	\removelatexerror
	\begin{algorithm}[H]
		\caption{Probabilistic Finite $\Lt$-Gain Stabilization Design and Implementation}
		\label{proc:DDPC}
		\SetKwInput{Input}{Input}
		\SetKwInput{Output}{Output}
		\Input{$\mathcal{W}$, $\gamma_1$, $\gamma_2$, $T$}
		Offline Design\label{step:step 1}\;
		\begin{alglist}
                Construct the matrix $\F$ using the measured trajectories in $\mathcal{W}$ following Lemma \ref{lem:behaviorApprox}.\;
                Compute $\F_p$, $\F_f$ and $\F_z$ using \eqref{eq:gIter}.\;
                Compute $\E_p$, $\E_f$ and $\E_u$ using \eqref{eq:errorCoeff}.\;
			Obtain matrix $\Pk$, the solution to the ARE \eqref{eq:ARE}.\;
			\eIf{$\EV[d_k]$ is not unchanged for all $k$}{
				Solve \eqref{eq:dissCons} with parameters $\gamma_1$ and $\gamma_2$.\;}{
				Solve \eqref{eq:dissConsFixedDisturbance} with parameters $\gamma_1$ and $\gamma_2$.\;}	
		\end{alglist}
		Online Implementation \label{step:step 2} \;
		\begin{alglist}
            At $k=1$, initialize $\hat{g}_{0|0}$ and $\Pk_{0|0}$.\;
            \While{$k\leq T$}{\label{step:onlinestart}
			    \eIf{$\EV[d_k]$ is not unchanged for all $k$}{\label{step:onlinePriorStart}
				Compute $\hat{g}_{k|k-1}$ as \eqref{eq:controlConstruct}.\;}{
				Compute $\hat{g}_{k|k-1}$ as \eqref{eq:controlConstructConstDist}.\;}\label{step:onlinePriorEnd}
                Implement $\bar{u}_k$ following \eqref{eq:controlAction}.\;
                Obtain $\Pk_{k|k-1}$ following \eqref{eq:KFIterStart}.\;
                Calculate $\hat{w}_{k|k-1}$ as $\hat{w}_{k|k-1}=\Pi_f\F\hat{g}_{k|k-1}$.\;
                Obtain $\mathcal{K}_k$ and $\Pk_{k|k}$ using \eqref{eq:KFGain} and \eqref{eq:KFIterEnd}, respectively.\;
                Measure $w_k^m$ and compute $\hat{g}_{k|k}$ using \eqref{eq:posterior}.\;
			Set $k\gets k+1$ and return to Step \ref{step:onlinestart}.\;
            }
		\end{alglist}
	\end{algorithm}
\end{figure}

\section{Numerical Example}\label{sec:example}
Consider an LTI system with kernel representation
\begin{equation}\label{eq:exampleKernel}
    R_y(\sigma^{-1})y_k+R_u(\sigma^{-1})u_k+R_d(\sigma^{-1})d_k=0,
\end{equation}
where
\begin{align*}
    R_y(\sigma^{-1})&=\begin{bmatrix}
        4.29-4.5\sigma^{-1} & -1.43+1.5\sigma^{-1}\\
        -1.43-1.57\sigma^{-1} & 2.14+2.36\sigma^{-1}
        \end{bmatrix}\\
    R_u(\sigma^{-1})&=\begin{bmatrix}
       -1.11-0.65\sigma^{-1} & -1.4+0.42\sigma^{-1}\\
        -1.47+0.024\sigma^{-1} & -1.45-0.17\sigma^{-1}
        \end{bmatrix}\\
    R_d(\sigma^{-1})&=\begin{bmatrix}
            -0.15 & -0.12\\
             -0.11 & -0.16
            \end{bmatrix}
\end{align*}
Using a discrete-time counterpart of the criteria in \cite[Theorem 7.6.2]{Polderman:1998}, it can be verified that this system is unstable.

The covariance matrices for $\Delta d$, $\Delta u$ and $n$ are $\Var_d= \diag(0.4,0.35),
\Var_u= \diag(0.2,0.1)$ and $\Var_n= \diag(0.6,0.2,0.1,0.5,0.5,0.3)$, respectively. In this example, the representation \eqref{eq:exampleKernel} was only used for data generation. We set $L = 4$ and generated trajectories of $\Delta d_k$, $\Delta u_k$ and $n_k$ using Gaussian mixture models. The example consists of the general case and the case with a constant disturbance mean. 

\subsection{The General Case}
For the general case, simulation studies were carried out by choosing $\gamma_1^2=\gamma_2^2=0.81$. Fig~\ref{fig:iotraj-gen} shows the control performance of 50 different trajectories in the controlled system under different disturbances. It can be seen that all trajectories have been stabilized when $\mathbb{E}[d_k]$ is time-varying. To show the probabilistic finite $\Lt$ gain performance, the cumulative distribution function of $\Gamma_T$ in~\eqref{eq:truncL2} for $T = 5, 20$ and $100$ are shown in Fig~\ref{fig:cdf-gen-single}, with the black dashed line showing the bound of the inner probability. Due to the combined effect of initial estimation error, the stochasticity of $\Delta d$ and the mismatch between $\Pk$ and $\Pk_{k|k}$, the inner probability bound is unlikely to be satisfied, as is shown by the red curve for $T=5$. However, as $T$ increases, the estimation error covariance converges and strong law of large numbers is gradually satisfied, the cumulative density gradually moves to intersect with the bound (as shown by the blue curve when $T=20$) and is eventually completely above the bound, as shown by the magenta curve when $T=100$. Furthermore, the above experiment has been repeated 50 times, each with a different distribution of $\Delta d$ but with the same covariance $\Var_d$. The cumulative distribution curves are depicted in Fig~\ref{fig:cdf-gen}, which shows that all cumulative distribution functions of $\Gamma_{100}$ are above the dashed curve. While one would need infinitely many trajectories to fully validate the proposed approach, the sufficiently large sample base gives a strong and promising indication of its effectiveness.

\begin{figure}[tb]
    \centering
    \includegraphics[width=\linewidth]{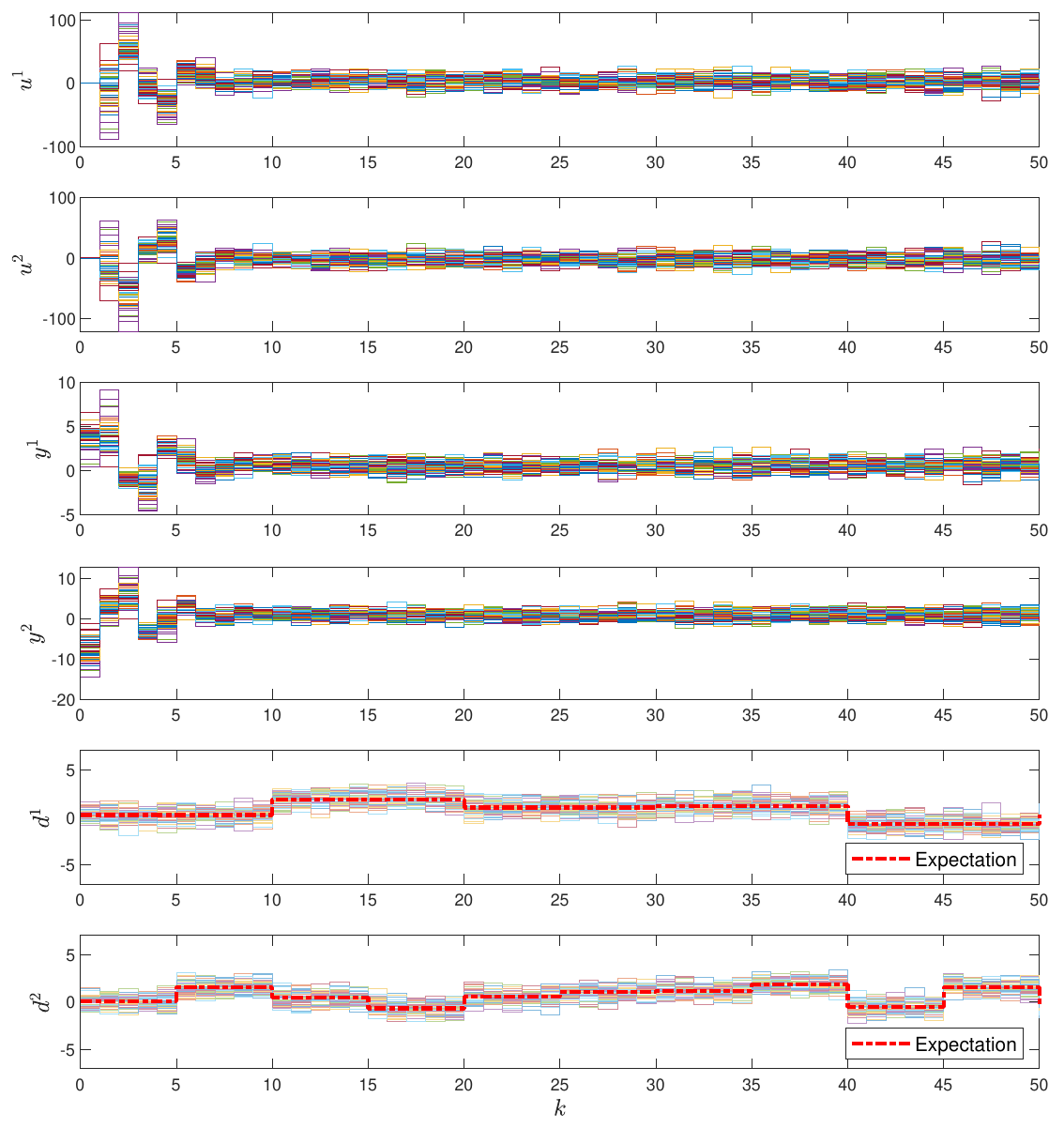}
    \caption{ $\mathcal{L}_2$ stabilization performance in the general case ($\gamma_1^2 = \gamma_2^2 = 0.81$): fifty different closed-loop input-output and disturbance trajectories and time-varying $\mathbb{E}[d_k]$.}
    \label{fig:iotraj-gen}
\end{figure}

\begin{figure}[tb]
    \centering
    \includegraphics[width=\linewidth]{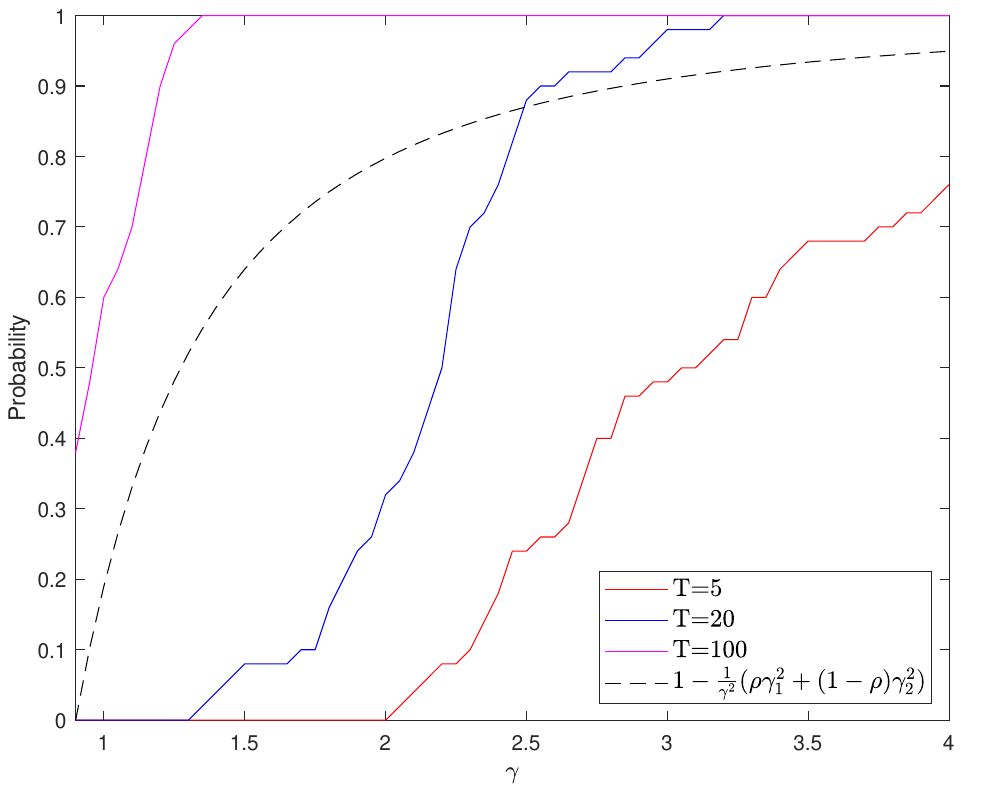}
    \caption{The inner probability test in the general case ($\gamma_1^2 = \gamma_2^2 = 0.81$): cumulative distribution functions of $\Gamma_T$ for different values of $T$.}
    \label{fig:cdf-gen-single}
\end{figure}

\begin{figure}[tb]
    \centering
    \includegraphics[width=\linewidth]{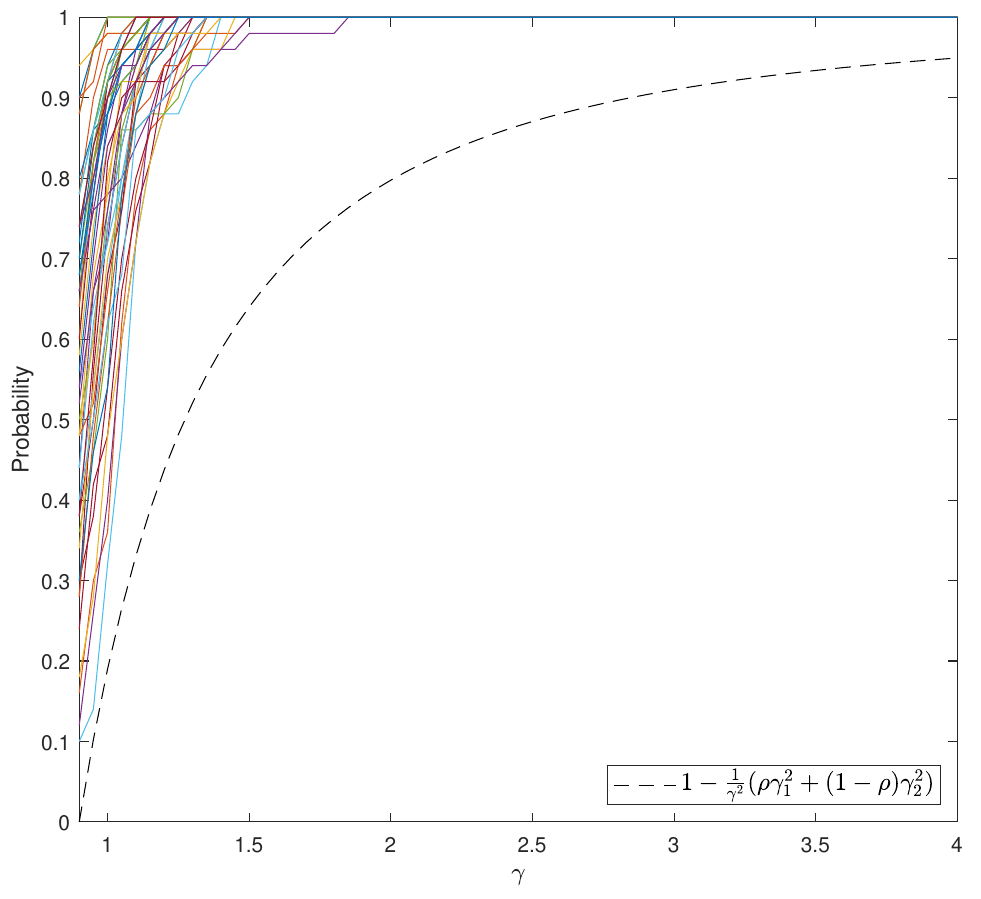}
    \caption{The outer probability test in the general case ($\gamma_1^2 = \gamma_2^2 = 0.81$): fifty different cumulative distribution functions of $\Gamma_T$ for $T = 100$. }
    \label{fig:cdf-gen}
\end{figure}

\subsection{The case with a Constant Disturbance Mean}
For the case with constant mean, we begin by comparing its performance with the general case using the same choice of $\gamma_1$ and $\gamma_2$. The inner probability test and the outer probability test  are shown in Fig~\ref{fig:const-inner-same-gamma} and Fig~\ref{fig:const-outer-same-gamma}, respectively. Comparing  Fig~\ref{fig:const-inner-same-gamma} with Fig~\ref{fig:cdf-gen-single}, the inner probability and the outer probability can be satisfied for all $\gamma$ with a smaller number of steps $T$ in the case with a constant disturbance mean. Furthermore, comparing Fig~\ref{fig:const-outer-same-gamma} with Fig~\ref{fig:cdf-gen} shows that the additional information of constant disturbance yields a better performance for the outer probability.

Simulation has also been carried out with $\gamma_1$ and $\gamma_2$ determined by optimizing $\rho\gamma_1^2+(1-\rho)\gamma_2^2$ while solving the LMIs in Theorem \ref{thm:dissConConstDist}. The values of $\gamma_1$ and $\gamma_2$ have been obtained as $\gamma_1^2$ = 0.22 and $\gamma_2^2$ = 0.36. The cumulative distributions for different values of $T$ have been depicted in Fig~\ref{fig:const-inner-diff-gamma}, which shows that the inner probability bound can still be satisfied eventually, but the time to achieve the bound is longer. Furthermore, Fig~\ref{fig:const-outer-diff-gamma} shows that the optimized values of $\gamma_1$ and $\gamma_2$ provide a much tighter bound for the cumulative distributions, especially when the value of $\gamma$ is small.

\begin{figure}[tb]
    \centering
    \includegraphics[width=\linewidth]{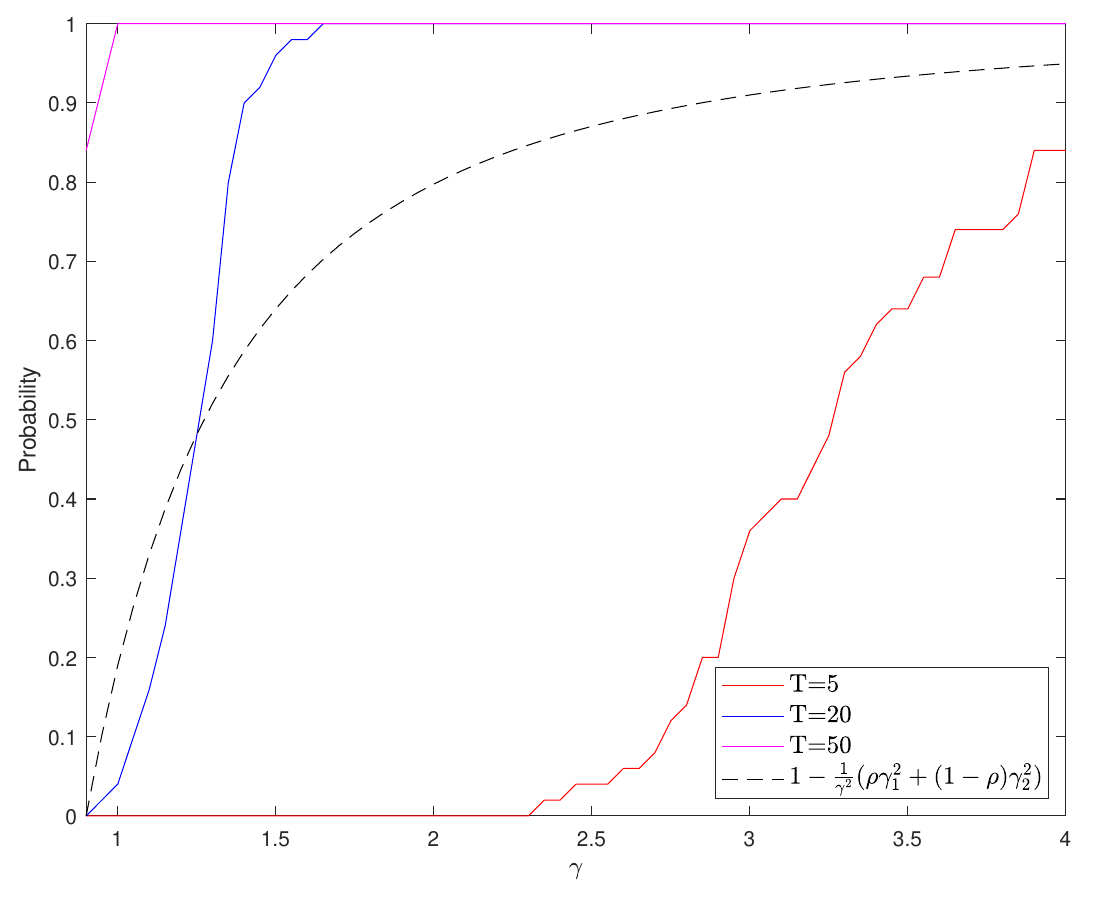}
    \caption{The inner probability test in the case with a constant disturbance mean ($\gamma_1^2 = \gamma_2^2 = 0.81$): cumulative distribution functions of $\Gamma_T$ for different values of $T$.}
    \label{fig:const-inner-same-gamma}
\end{figure}

\begin{figure}[tb]
    \centering
    \includegraphics[width=\linewidth]{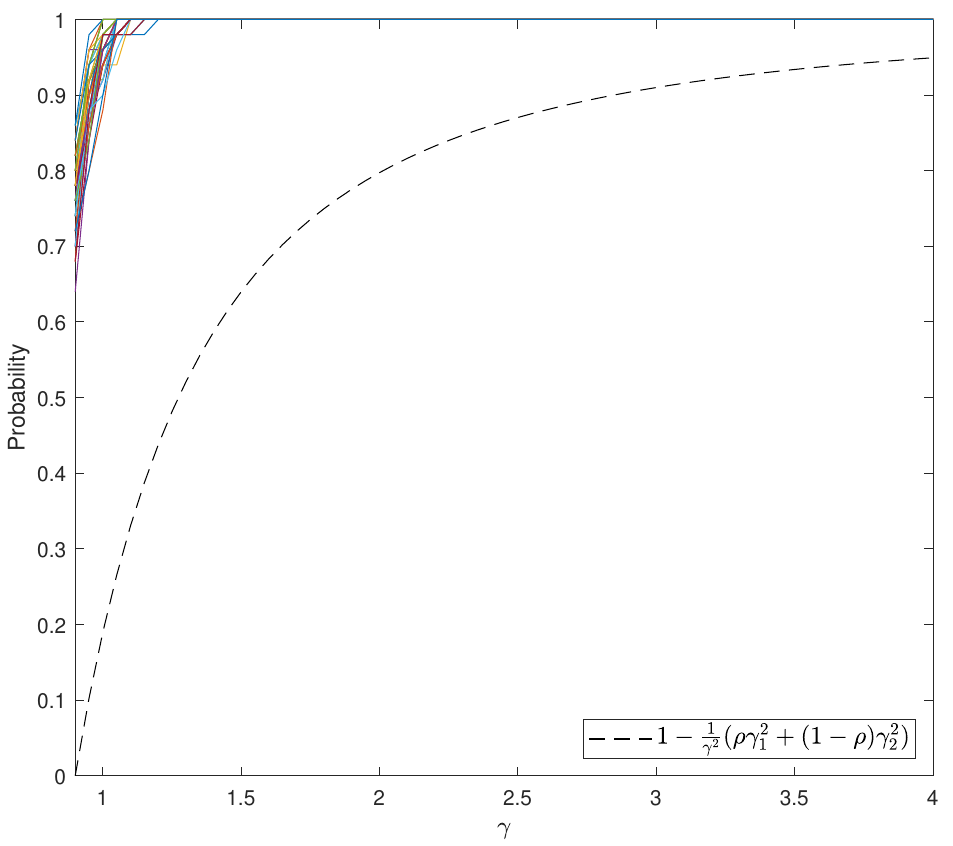}
    \caption{The outer probability test in the case with a constant disturbance mean ($\gamma_1^2 = \gamma_2^2 = 0.81$): fifty different cumulative distribution functions of $\Gamma_T$ for $T = 50$. }
    \label{fig:const-outer-same-gamma}
\end{figure}

\begin{figure}[tb]
    \centering
    \includegraphics[width=\linewidth]{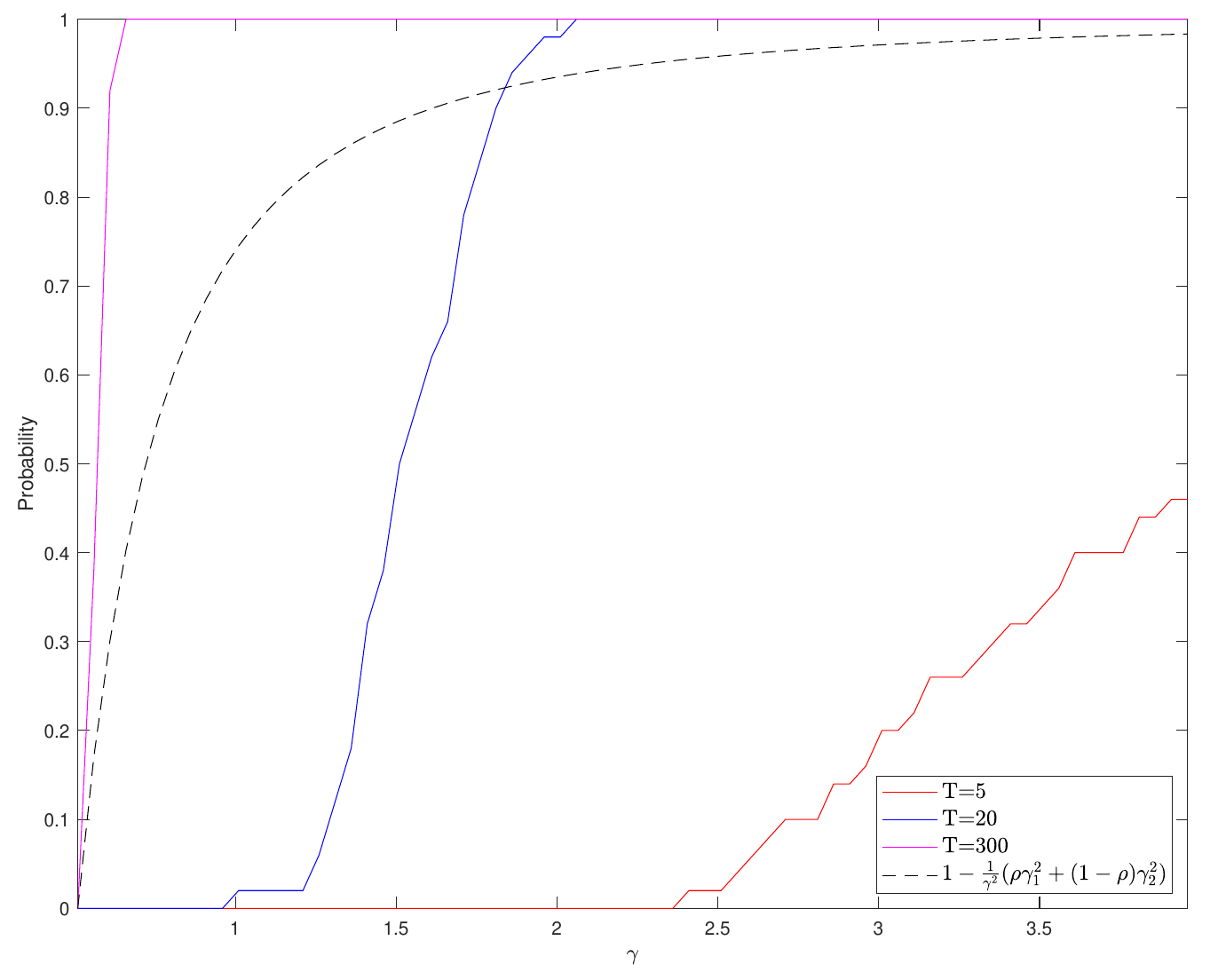}
    \caption{The inner probability test in the case with a constant disturbance mean ($\gamma_1^2 = 0.22$ and $\gamma_2^2 = 0.36$): cumulative distribution functions of $\Gamma_T$ for different values of $T$.}
    \label{fig:const-inner-diff-gamma}
\end{figure}

\begin{figure}[tb]
    \centering
    \includegraphics[width=\linewidth]{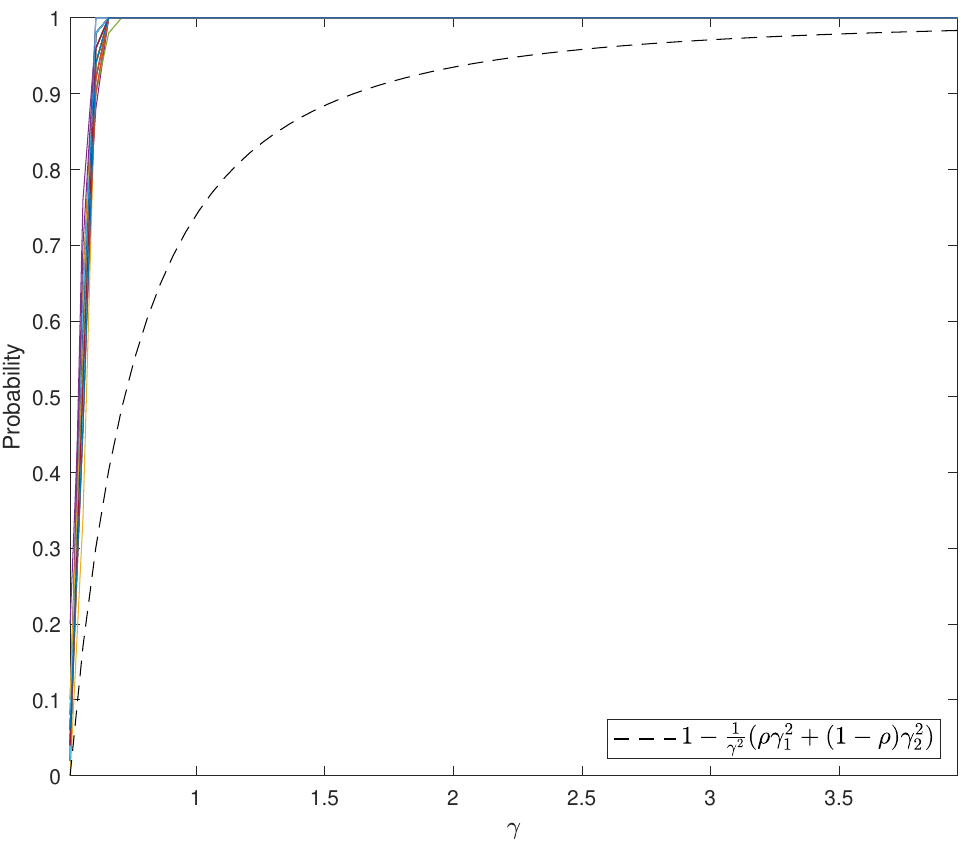}
    \caption{The outer probability test in the case with a constant disturbance mean ($\gamma_1^2 = 0.22$ and $\gamma_2^2 = 0.36$): fifty different cumulative distribution functions of $\Gamma_T$ for $T = 300$. }
    \label{fig:const-outer-diff-gamma}
\end{figure}

\section{Conclusion}\label{sec:conclusion}
In this paper, we have developed a new probabilistic framework for data-driven finite $\Lt$-gain stabilization control of stochastic LTI systems. Leveraging on the state property of the parameterizer, an optimal filtering algorithm is developed to estimate the true trajectory from noisy measurements, and the estimation error covariance is directly integrated into the offline control design, leading to a control algorithm that characterizes the probability of success for any given performance level.

This probabilistic design framework opens up many possibilities for new research directions. An immediate future work is to extend this approach to distributed control of interconnected systems. The uncertainties for each subsystem propagate in the interconnection network and indirectly affect other subsystems, which means that the estimation algorithm also needs to be extended to a distributed version. Furthermore, as shown in the simulation studies, the fact that Theorems \ref{thm:dissConGeneral} and \ref{thm:dissConConstDist} cater for all possible distributions of $\Delta d$ inevitably makes the probability bound loose for many distributions. In the event when more statistical information of $\Delta d$ is known, it would be interesting to investigate whether it is possible to obtain a tighter bound, yielding less conservative control design.

\appendix

\subsection{Proof of Lemma \ref{lem:QFbehaviorOffset}}\label{appx:proofQFbehaviorOffset}
The \emph{if} part is obvious. For the \emph{only if} part, consider first the extended variable $(v_0,v_1,v_2)$ satisfying
        \begin{equation}
            \begin{bmatrix}\label{eq:QFBehaviorExtend}
			    v_0\\ v_1\\ v_2
			\end{bmatrix}^\top\begin{bmatrix}
                \beta & \frac{1}{2}\eta^\top & \frac{1}{2}\mu^\top\\
                \frac{1}{2}\eta &Q & S\\ 
                \frac{1}{2}\mu & S^\top & -R
            \end{bmatrix} \begin{bmatrix}
			    v_0\\ v_1\\ v_2
			\end{bmatrix}\geq0.
        \end{equation}
Using \cite[Lemma 7]{Yan:2025}, a necessary and sufficient condition for both $v_0$ and $v_1$ to be free is the existence of $\tau\in\mathbb{R}$ such that
\begin{equation}\label{eq:extendFree}
            \begin{bmatrix}
                \beta & \frac{1}{2}\eta^\top\\
                \frac{1}{2}\eta & Q
            \end{bmatrix}+\begin{bmatrix}
                \frac{1}{2}\mu^\top \\ S
            \end{bmatrix}(R^\dagger+\tau R_\perp)\begin{bmatrix}
                \frac{1}{2}\mu & S^\top
            \end{bmatrix}\geq0,
        \end{equation}
and a solution of $v_2$ for any given $v_0$ and $v_1$ can be constructed as
\begin{equation}\label{eq:QfSol}
    v_2=\left(R^\dagger+\frac{\tau}{2}R_\perp\right)\begin{bmatrix}
                \frac{1}{2}\mu & S^\top
            \end{bmatrix}\begin{bmatrix}
                v_0\\ v_1
            \end{bmatrix}.
\end{equation}
Now, comparing \eqref{eq:QFBehaviorExtend} with \eqref{eq:QFbehaviorOffset} shows that the latter is a special case of the former with $v_0\equiv 1$. As such, we can substitute $v_0=1$ to \eqref{eq:QfSol}, giving
\begin{equation}
    v_2=\underbrace{\left(R^\dagger+\frac{\tau}{2}R_\perp\right)S^\top}_Kv_1+\underbrace{\frac{1}{2}\left(R^\dagger+\frac{\tau}{2}R_\perp\right)\mu}_\xi,
\end{equation}
which is of the form \eqref{eq:gainQF}.

        In the case when $\mu=0$, \eqref{eq:extendFree} can be written as 
        \begin{equation}
            \begin{bmatrix}
                \beta & \frac{1}{2}\eta^\top\\
                \frac{1}{2}\eta & Q+S(R^\dagger+\tau R_\perp)S^\top
            \end{bmatrix}\geq0,
        \end{equation}
which immediately implies that $\beta\geq0$. In such a case, a possible solution of $v_2$ can be derived from \eqref{eq:QfSol} as
        \begin{equation}
            v_2=\left(R^\dagger+\frac{\tau}{2}R_\perp\right)S^\top v_1,
        \end{equation}
        i.e., $\xi$ in \eqref{eq:gainQF} can be chosen to be 0.

\bibliographystyle{IEEEtran}

\bibliography{refUpdated}

\end{document}